\newcommand{\defterm}[1]{{\emph{#1}}}
\g@addto@macro\bfseries{\boldmath}
\newcommand{\graybar}{{\color{Gray}\hrulefill}}
\newcommand{\mystrut}{\vrule height 1.25em depth .75em width 0pt }
\newcommand{\bigstrut}{\vrule height 2em depth 1.25em width 0pt }
\newcommand{\emstrut}[2]{\vrule height #1em depth #2em width 0pt }
\titlespacing*{\paragraph}{%
  0pt}{%              left margin
  {\medskipamount}}{% space before (vertical)
  1em}%               space after (horizontal)
\begin{document}

\title{Faster~Algorithms for~Rooted~Connectivity in~Directed~Graphs}

\author{
Chandra Chekuri\thanks{Dept.\ of Computer Science, Univ.\ of Illinois, Urbana-Champaign, Urbana,
  IL 61801. {\tt chekuri@illinois.edu}. Supported in part by NSF grants
CCF-1910149 and CCF-1907937.}
\and
Kent Quanrud\thanks{Dept.\ of Computer Science, Purdue University,
  West Lafayette, IN 47909. {\tt krq@purdue.edu}.}
}

\maketitle

\nocite{gabow}

\begin{abstract}
  We consider the fundamental problems of determining the rooted and
  global edge and vertex connectivities (and computing the
  corresponding cuts) in \emph{directed} graphs.  For rooted (and
  hence also global) edge connectivity with small integer capacities
  we give a new randomized Monte Carlo algorithm that runs in time
  \begin{math}
    \apxO{n^2}.
  \end{math}
  For rooted edge connectivity this is the first algorithm to improve
  on the $\Omega(n^3)$ time bound in the dense-graph high-connectivity
  regime. Our result relies on a simple combination of sampling
  coupled with sparsification that appears new, and could lead to
  further tradeoffs for directed graph connectivity problems.

  We extend the edge connectivity ideas to rooted and global vertex
  connectivity in directed graphs. We obtain a
  $(1+\eps)$-approximation for rooted vertex connectivity in
  $\apxO{n W / \eps}$ time where $W$ is the total vertex weight
  (assuming integral vertex weights); in particular this yields an
  $\apxO{n^2/\eps}$ time randomized algorithm for unweighted graphs.
  This translates to a $\apxO{\vc n W}$ time exact algorithm where
  $\vc$ is the rooted connectivity. We build on this to obtain
  similar %
  bounds for global vertex connectivity.

  Our results complement the known results for these problems in the
  low connectivity regime due to work of Gabow \cite{gabow} for edge
  connectivity from 1991, and the very recent work of Nanongkai et
  al.\ \cite{nsy} and Forster et al.\ \cite{fnsyy} for vertex
  connectivity.
\end{abstract}

\clearpage

\section{Introduction}
\labelsection{intro}

Let $G = (V,E)$ be a simple directed graph; that is, a directed graph
with no parallel edges. Recall that $G$ is \defterm{strongly
  connected} if there is a path from any vertex $a \in V$ to any
vertex $b \in V$. The \defterm{edge connectivity} is the minimum
number of edges that need to be removed so that $G$ is \emph{not}
strongly connected. The corresponding set of edges is called the
\defterm{minimum edge cut}. The \defterm{vertex connectivity} is the
minimum number of vertices that need to be removed so that the
remaining graph is not strongly connected or has only one vertex.  The
corresponding set of vertices is called the \defterm{minimum vertex
  cut}.  These problems generalize to weighted settings where the
edges and vertices are assigned positive weights and the goal is to
find the minimum weight edge or vertex cut. Determining the edge and
vertex connectivities and finding the corresponding minimum cuts are
among the basic problems in graph algorithms.  This work obtains
faster randomized algorithms for finding minimum edge and vertex cuts
in directed graphs, especially in the dense setting. The algorithms
are based on a simple technique which could be of independent
interest.

Our interest is actually in the more general \defterm{rooted
  connectivity} problems. Let $r \in V$ be a fixed vertex, called the
\defterm{root}. The \defterm{$r$-rooted edge connectivity} is the
minimum number of edges that have to be removed so that there is some
vertex in $V -r$ that $r$ cannot reach.  An algorithm for rooted edge
connectivity easily implies an algorithm for edge connectivity, by
fixing any root and returning the minimum of the rooted connectivity
in $G$ and the rooted connectivity in the graph obtained by reversing
all the edges in $G$.  Another important motivation for investigating
rooted connectivity is the fundamental result by Edmonds
\cite{edmonds-70} that the $r$-rooted edge connectivity equals the
maximum number of edge-disjoint arboresences rooted at $r$. We refer
the reader to \cite{schrijver,frank} for further connections in
combinatorial optimization. Similarly, the \defterm{$r$-rooted vertex
  connectivity} is the minimum number of vertices (excluding $r$) that
have to be removed so that $r$ cannot reach some vertex in the
residual graph. Algorithms for rooted vertex connectivity also lead to
global vertex connectivity by a similar but somewhat more involved
reduction.  There is a long and rich history of developing algorithms
for determining the edge and vertex connectivity.  We first note that
all of these connectivity and cut problems reduce to a polynomial
number of $(s,t)$-cut and flow problems by standard reductions.
Beyond $(s,t)$-flow, an interesting algorithmic landscape emerges with
different running times depending on whether we are interested in edge
or vertex cuts, directed or undirected graphs, and weighted or
unweighted graphs.

\paragraph{Rooted and global edge-connectivity:}
We first discuss edge connectivity in directed graphs.  Let $\ec$
denote either the rooted or global edge connectivity of the graph
depending on the context. One can compute both via $O(n)$
$(s,t)$-minimum cut computations. For the simple and unweighted
directed graph setting, Mansour and Schieber \cite{mansour-schieber}
improved on this and gave algorithms that run in $\bigO{m n}$ time or
in $\bigO{\ec^2 n^2}$ time for global connectivity. It was also
observed by Alon (cf.\ \cite{mansour-schieber}) that this approach can
be parameterized by the minimum out-degree $\delta^+$ to obtain a
$\bigO{n \log{\delta^+} \ectime{m,n} / \delta^+}$ algorithm, where
$\ectime{m,n}$ denotes the running time for $(s,t)$-edge
connectivity\footnote{Depending on the context, we let $\ectime{m,n}$
  denote the running time for $(s,t)$-cut in either a simple or a
  weighted directed graph with $m$ edges and $n$ vertices.}.  Gabow
\cite{gabow} then gave a $\bigO{m \ec \log{n^2 / m}}$ for rooted
connectivity in graphs with integer capacities.  Gabow's algorithm is
based on Edmonds' theorem described above.  Gabow's algorithm is
nearly linear time for sparse unweighted graphs, and remains the
fastest algorithm for small $\ec$ for both rooted and global edge
connectivity. It is interesting that Gabow's algorithm is not based on
$(s,t)$-flow. For directed graphs with arbitrary edge capacities, Hao
and Orlin \cite{hao-orlin} gave an $\bigO{m n \log{n^2 / m}}$
algorithm for rooted connectivity by adapting the push-relabel max
flow algorithm; in fact their algorithm computes the
$(r,v)$-connectivity for all $v \in V-r$.  Thereafter there have been
no direct running time improvements to rooted or global edge
connectivity in directed graphs but we point out that there have been
numerous breakthroughs in the running times for $(s,t)$-flow and
connectivity
\cite{goldberg-rao,orlin-13,lee-sidford,madry-13,madry-16,liu-sidford-20,kls-20,brand++,glp-21}. In
particular, starting with the work of Goldberg and Rao
\cite{goldberg-rao}, the running time for $(s,t)$-flow is $o(mn)$
which breaks the flow-decomposition barrier. Motivated by these
developments and several others, there has been a resurgence of
interest and literature on faster graph algorithms for several
fundamental problems. Despite these developments there has been no
algorithm for rooted edge-connectivity in simple directed graphs that
is faster than $O(n^3)$ in the worst case. In this paper we obtain a
nearly quadratic time algorithm which also applies to graphs with
small integer capacities.\footnote{Here and throughout $\apxO{\cdots}$
  hides polylogarithmic factors in $m$ and $n$.  We note that the
  ideas introduced in this work are simple and the logarithmic factors
  they generate are easy to account for. However \reftheorem{main}
  also uses the recent $(s,t)$-flow algorithm of \cite{brand++} with
  running time $\ectime{m}{n} = \apxO{m + n^{1.5}}$, which has large
  logarithmic factors.}

\begin{restatable}{theorem}{QuadraticEdgeConnectivity}
  \labeltheorem{main} Let $G = (V,E)$ be a simple directed graph with
  $m$ edges $n$, vertices, and integer edge weights $w: E \to [U]$.
  Then the minimum rooted $r$-cut can be computed with high
  probability in $\apxO{n^2 U}$ randomized time.
\end{restatable}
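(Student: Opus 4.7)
The plan is a sample-and-Gabow approach. Uniform sampling at the unit-capacity level will reduce the rooted connectivity of the sampled subgraph to $\Theta(\log n)$, after which Gabow's arborescence-packing algorithm solves the sampled instance exactly in subquadratic time; an approximate cut obtained this way is then lifted to the exact rooted minimum cut of $G$ via a controlled number of $(r,t)$-max-flow computations using the Brand~et~al.\ algorithm.

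First I would replace each edge of integer capacity $w_e$ by $w_e$ parallel unit-capacity edges, so that every unit of capacity can be sampled independently, preserving the rooted connectivity. Via doubling I would guess $\hat{\ec} \in \{1, 2, 4, \dots, O(nU)\}$. For a given guess I would sample each unit edge independently with probability $p = \Theta(\log n/\hat{\ec})$. A Chernoff bound, applied uniformly over the polynomially many constant-factor approximately-minimum rooted cuts of $G$ (via a Karger-style contraction counting adapted to rooted directed cuts), shows that when $\hat{\ec} \in [\ec/2, \ec]$ the sampled multigraph $G_p$ has rooted connectivity $\Theta(\log n)$ and every rooted cut is preserved to within a constant factor with high probability.

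Second, on $G_p$ I would run Gabow's rooted edge-connectivity algorithm. Its $\apxO{m\cdot\ec}$ running time (on a unit-capacity multigraph with $m$ edges and connectivity $\ec$) specializes on $G_p$, whose total capacity is $\apxO{n^2 U/\hat{\ec}}$ and whose rooted connectivity is $\apxO{1}$, to $\apxO{n^2 U/\hat{\ec}}$. Summed geometrically over all doubling guesses of $\hat{\ec}$, this totals $\apxO{n^2 U}$. The algorithm returns both the exact minimum rooted cut $S^*$ of $G_p$ and a maximum $r$-arborescence packing whose union $H \subseteq G_p$ has $\apxO{n}$ edges; by the sampling analysis, $S^*$ is a constant-factor approximation to the rooted minimum cut of $G$, and $H$ is a sparse skeleton that captures its structure.

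Third, I would lift $S^*$ to the exact answer. The approximate cut $S^*$ together with the arborescence skeleton $H$ should localize the true minimum rooted cut: the true cut either equals $S^*$ or separates $r$ from one of a small candidate set $T$ of target vertices determined from $S^*$ and $H$. For each $t \in T$ I would compute an $(r,t)$-max flow in $G$ using the Brand~et~al.\ algorithm at cost $\apxO{m + n^{1.5}} = \apxO{n^2}$, returning the minimum over $t \in T$. If $|T| \leq \apxO{U}$, the total cost stays within the $\apxO{n^2 U}$ budget. The main obstacle I anticipate is exactly this last step: bounding $|T|$ and certifying that the approximate cut plus the sampled arborescence packing suffice to pin down the exact cut with so few max-flow calls. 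This almost certainly requires an uncrossing or laminar-family structural argument for rooted minimum cuts in directed graphs, combined with the specific arborescence packing produced by Gabow on the sample. A secondary subtlety is the union bound on sampled rooted cuts, which needs a polynomial bound on the number of approximately-minimum rooted directed cuts; this should follow from a Karger-style contraction counting specialized to the rooted setting.
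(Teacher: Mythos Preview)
Your proposal has two genuine gaps, and the paper's proof takes a completely different route.

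\textbf{Gap 1: the cut-counting premise is false.} You assert that there are only polynomially many constant-factor approximate rooted cuts, to be shown by ``Karger-style contraction counting adapted to rooted directed cuts.'' No such adaptation exists: Karger's contraction survives because a random edge lies in the minimum undirected cut with probability $O(1/n)$, but for a rooted directed cut $\delta^-(T)$ you must avoid both $\delta^-(T)$ and $\delta^+(T)$, and $|\delta^+(T)|$ can be $\Omega(m)$. Worse, the conclusion is simply false. Take $r$, a clique $C$ on $k$ vertices with $r\to c$ for every $c\in C$, and satellites $v_1,\dots,v_k$ with $c\to v_i$ for all $c\in C$ and a single back-edge $v_i\to c_i$ (the $c_i$ distinct). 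Here $n=2k+1$, the rooted connectivity is $k$, yet for every $L\subseteq\{v_1,\dots,v_k\}$ the set $C\cup L$ has $\delta^-(C\cup L)=2k-|L|\in[k,2k]$. That is $2^{(n-1)/2}$ two-approximate rooted cuts in a simple unweighted digraph. So the union-bound step in your sampling argument has no footing; whether uniform edge sampling nonetheless yields a constant-factor approximate rooted cut is an open-ended question you have not addressed.

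\textbf{Gap 2: the lifting step is a wish, not an argument.} Even granting an approximate cut $S^*$ and an arborescence skeleton $H$ from the sample, you give no mechanism to produce a candidate set $T$ with $|T|\le \tilde O(U)$ such that some $(r,t)$-flow with $t\in T$ recovers the exact minimum. You yourself flag this as the main obstacle; there is no known uncrossing or laminar structure for directed rooted cuts that would deliver it.

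\textbf{What the paper actually does.} The paper avoids edge sampling entirely. Its key structural insight is the \emph{Rooted Sparsification Lemma}: if the sink component of a minimum $r$-cut has $k>1$ vertices, then every vertex in it has weighted in-degree $<(1+U)k$ (since $\lambda<Uk$ and at most $(k-1)U$ weight comes from inside), so all vertices of in-degree $\ge(1+U)k$ can be contracted into $r$, leaving $O(nkU)$ edges while preserving the minimum $r$-cut exactly. The algorithm then guesses $k$ up to a factor $2$; for each guess it sparsifies and samples $O(n/k)$ \emph{vertices} $t$, running one $(r,t)$-flow per sample (using $\ectime{m}{n}=\tilde O(m+n^{1.5})$). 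For small $k$ it instead runs Gabow on the sparsified graph. Balancing the small-$k$ and large-$k$ regimes at $k\approx\sqrt{n}/U$ gives $\tilde O(n^2U)$. The whole argument is deterministic sparsification plus vertex sampling, with no reliance on cut counting or on converting an approximation to an exact answer.
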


\begin{table}[t]
  \everymath{\displaystyle}
  \renewcommand\tabularxcolumn[1]{m{#1}}%
  \centering
  \begin{tabularx}{\textwidth}{ c | >{\mystrut}c | X |}
    \cline{2-3}
    &
      \begin{math}
        \bigO{n \edgeconn{m,n}}
      \end{math}
      & Trivial. Also holds for rooted connectivity. \\
      \cline{2-3} &
      \begin{math}
        \bigO{n \edgeconn{m,n}_{\lambda}}
      \end{math}
    &
      Matula \cite{matula-87}. Also holds for rooted connectivity. \\
    \cline{2-3}
    &
      \begin{math}
        \bigO{m n},
      \end{math}
      \begin{math}
        \bigO{\lambda^2 n^2}
      \end{math}
    & Mansour and Schieber \cite{mansour-schieber} \\
    \cline{2-3}                      %
    *
    &
      \bigstrut
      \begin{math}
        \bigO{\frac{n \log \delta}{\delta} \edgeconn{m,n}}
      \end{math}
    & Alon (cf.\ Mansour and Schieber \cite{mansour-schieber}). $\delta$ is the
      minimum out-degree in the graph.
    \\
    \cline{2-3} * &
                    \begin{math}
                      \bigO{m \lambda \log{n^2 / m}}
                    \end{math}
    & Gabow \cite{gabow}. Also holds for rooted connectivity.
    % \\
    % \cline{2-3}
    % \begin{math}
    %   \apxO{\sqrt{m n \edgeconn{m,n}}}
    % \end{math}
    % & combines the above two running times, and hides logarithmic
    % factors.
    \\
    \cline{2-3} * & \bigstrut
    \begin{math}
      \apxO{n^{2}}
    \end{math}
    & \reftheorem{main}. Randomized. Also holds for rooted
    connectivity.
    \\
    \cline{2-3}
  \end{tabularx}

  \caption{Running times for finding the minimum cut in unweighted
    directed graphs (i.e., $U = 1$). $\edgeconn{m,n}$ denotes the
    running time of computing $(s,t)$-connectivity (in unweighted
    graphs). See also
    \cite[\S15.3a]{schrijver}. \labeltable{simple-ec}}

  \graybar
\end{table}

This running time is particularly compelling when the rooted edge
connectivity $\ec$ is high.

\paragraph{Rooted and global vertex-connectivity:}
We now consider (rooted) vertex connectivity in directed graphs. It is
well known that for fast algorithms, global vertex connectivity is
more involved than edge connectivity and the running times are more
varied. While the rooted vertex connectivity can be reduced to
computing $O(n)$ $(s,t)$-cuts, the global version, if done naively,
would require $\Omega(n^2)$ calls to the $(s,t)$-cut problem since it
is not obvious how to find a vertex that is not part of the minimum
global vertex cut.  There is a large body of literature and we
highlight the leading (randomized) running times, where we state
running times for randomized algorithms with high probability of
success. Let $\vc$ denote the weight minimum vertex cut, where we
assume the minimum weight of any vertex is 1. For large $\vc$ and
general capacities, there is a randomized algorithm by Henzinger \etal
\cite{ghr} (extending the directed edge connectivity algorithm of
\cite{hao-orlin}) that runs in $\bigO{m n \log{n}}$ time. For small
values of $\vc$ in the unweighted setting, recent randomized
algorithms by Forster \etal \cite{fnsyy} based on \emph{local
  connectivity} have obtained $\apxO{m \vc^2}$ and
$\apxO{n \vc^3 + \vc^{3/2} \sqrt{m} n}$ running times. For more
intermediate values of $\vc$, there are also randomized
$\apxO{\vc m^{2/3} n}$ and $\apxO{\vc m^{4/3}}$ time algorithms
\cite{nsy} as well as an $\bigO{n^{\omega} + n \vc^{\omega}}$ time
algorithm \cite{cheriyan-reif}, where $\omega \approx 2.3728596$ is
the current exponent for fast matrix multiplication
\cite{alman-vw}. There is also recent interest in obtaining fast
$\epsmore$-approximation algorithms for minimum vertex cut
\cite{nsy,fnsyy}. In particular \cite{fnsyy} obtains a randomized
algorithm with running time $\apxO{m \vc / \eps}$. Here too we can ask
whether one can obtain algorithms that beat $n^3$ in the worst-case
for rooted and global vertex connectivity in directed graphs, even
when allowing for a constant factor approximation. We obtain the
following theorem.

\begin{restatable}{theorem}{SimpleVCRooted}
  \labeltheorem{simple-vc-rooted} \labeltheorem{rooted-vc-reduction}
  \setupVGintr Let $\vc$ be the rooted vertex connectivity from
  $r$. Let $W = \sum_{v \in V} \weight{v}$ be the total weight of the
  graph.  For any $\eps > 0$ a $(1+\eps)$-approximate rooted minimum
  vertex cut can be computed with high probability in
  $\apxO{m + n (W - \vc) / \eps}$ randomized time; for unit weights
  this is $\apxO{m + n (n - \vc) /\eps}$. The rooted connectivity can
  be computed with high probability in $\apxO{m + \vc n (W - \vc)}$ time.
\end{restatable}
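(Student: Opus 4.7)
The plan is to combine random sampling of candidate sinks with approximate $(r,t)$-vertex-max-flow computations, in the spirit of the edge-connectivity approach of \reftheorem{main}. Let $S^*$ be a minimum rooted vertex cut of weight $\vc$, let $R \ni r$ be the set of vertices reachable from $r$ in $G - S^*$, and let $\bar{R} := V \setminus (R \cup S^*)$ be the unreachable side. Then $\bar{R} \neq \emptyset$, $w(R) + w(\bar{R}) = W - \vc$, and for every $t \in \bar{R}$ the $(r,t)$-vertex-connectivity equals $\vc$ exactly. Hence one can certify $\vc$ by sampling a vertex that lands in $\bar{R}$ and then computing an $(r,t)$-vertex-max-flow.

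First I would sample candidate sinks $t \in V \setminus \{r\}$ with probability proportional to $w(t)$. Given a guess $\hat x \leq w(\bar{R})$ for the size of $\bar{R}$, drawing $\apxO{W/\hat x}$ independent samples hits $\bar{R}$ with high probability. Since $w(\bar{R})$ is unknown, I would iterate over $O(\log W)$ geometrically decreasing values of $\hat x$. For each sampled $t$, I would compute a $(1+\eps)$-approximate $(r,t)$-vertex-max-flow via the standard vertex-splitting reduction to an $(s,t)$-edge-max-flow on a graph with $2n$ vertices, $m+n$ edges, and edge capacities at most $W$. The target is for each per-call cost to be of order $\apxO{n \hat x / \eps}$, so that the work at each sampling level telescopes to $\apxO{nW/\eps}$ and, with a sharper accounting tied to $w(\bar{R}) \le W - \vc$, sums to the stated $\apxO{m + n(W-\vc)/\eps}$. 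Returning the minimum cut found across all samples yields the claimed $(1+\eps)$-approximation with high probability.

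For the exact bound, cut weights are integers bounded by $\vc$, so a $(1+\eps)$-approximation with $\eps = \Theta(1/\vc)$ rounds to the true value. Since $\vc$ is not known a priori, I would wrap the approximation in a doubling search on $\vc$, paying an extra $O(\log n)$ factor and obtaining the claimed $\apxO{m + \vc n(W - \vc)}$ running time.

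The main obstacle is engineering the approximate vertex-max-flow subroutine so that each call costs time scaling with the size of the cut, rather than with the size of the whole graph. Standard max-flow routines pay for the entire graph on each call, whereas here the cost must be local to the small side of the cut and must exploit the $\eps$-slack. This likely requires either a local-style vertex-flow procedure or an adaptation of the sampling/sparsification ideas underlying \reftheorem{main} to the vertex-capacitated setting; this is where the true technical content of the theorem will lie.
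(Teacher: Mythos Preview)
Your high-level structure is right --- geometrically guess the weight $\hat x$ of the sink component, sample $\apxO{(W-\vc)/\hat x}$ candidates, and solve a per-sample problem whose cost scales like $\apxO{n\hat x/\eps}$ so that everything telescopes. You also correctly identify where the difficulty lies. But the proposal stops short of supplying the two ideas that actually close the gap, and without them the argument does not go through.

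First, you need the vertex analogue of the Rooted Sparsification Lemma: if the sink component has weight at most $k$ and no singleton already gives a $(1+\eps)$-approximation, then every vertex in the sink has weighted in-degree below $(1+1/\eps)k$, so one may replace the incoming edges of all high-in-degree vertices by a single edge from $r$ and reduce to $\bigO{nk/\eps}$ edges while preserving the rooted vertex cut. This is what makes ``per-call cost $\apxO{n\hat x/\eps}$'' possible at all; a generic $(r,t)$-flow on the original graph does not scale with $\hat x$.

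Second, even after sparsification, the $(r,t)$-vertex-cut primitive costs $\vctime{m}{n}=\apxO{m+n^{1.5}}$, so each call has an additive $n^{1.5}$ term independent of $\hat x$. For small $\hat x$ the factor $\apxO{(W-\vc)/\hat x}\cdot n^{1.5}$ blows up. The paper handles this by splitting into two regimes at a threshold $\vc_0=\eps\sqrt{n}$: for $\hat x\geq\vc_0$ it uses sparsification plus black-box $(r,t)$-cut, and for sink weight $\leq\vc_0$ it switches to a local cut routine (in the spirit of \cite{fnsyy}) that runs in time proportional to the in-volume of the sink in the split graph and returns a $(1+\eps)$-approximate cut directly. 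Balancing $(W-\vc)n^{1.5}/\vc_0$ against $(W-\vc)\vc_0^2/\eps^3$ is what yields the $\apxO{m+n(W-\vc)/\eps}$ bound. Your proposal gestures at a local procedure but does not supply one, and without it the small-$\hat x$ regime cannot be brought under control.

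A minor point: the $(W-\vc)$ in place of $W$ in the sampling cost comes from restricting samples to $V\setminus(\{r\}\cup\outneighbors{r})$, since $\outdegree{r}\geq\vc$. Your ``sharper accounting'' remark should be made concrete this way.
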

Note that $W \geq n$ in the above running times.  We point out that
the approximation algorithm's running time is independent of
$\vc$. This large $\vc$ regime has been challenging for previous
approaches. The rooted connectivity algorithm, when combined with
sampling and other ideas, gives the following theorem for global
vertex connectivity.  As we remarked, the reduction from global to
rooted is not as clean for vertex connectivity as it is for edge
connectivity.

\begin{restatable}{corollary}{SimpleVCGlobal}
  \labelcorollary{simple-vc-global} \setupVGint Let
  $W = \sum_{v \in V} \weight{v}$ be the total vertex weight of the
  graph.  Let $\vc$ be the global vertex connectivity of $G$. There is
  a randomized algorithm that for any $\eps > 0$ outputs a
  $(1+\eps)$-approximate minimum vertex cut with high probability in
  time $\apxO{nW / \eps}$.  There is a $\apxO{\vc n W}$ time
  randomized algorithm that computes the (exact) minimum vertex cut
  with high probability. In particular, for unit weights, the running
  time is $\apxO{\vc n^2}$.
\end{restatable}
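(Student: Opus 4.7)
The plan is to reduce global vertex connectivity to several rooted vertex connectivity computations via random sampling, invoking \reftheorem{simple-vc-rooted} as a black box. Let $(A, S, B)$ be any minimum global vertex cut with separator $S$ of weight $\vc$ and nonempty sides $A, B$. For any $r \in A$, the $r$-rooted out-connectivity is at most $\vc$, since $S$ already separates $r$ from the nonempty set $B$; symmetrically, for any $r \in B$, the $r$-rooted in-connectivity is at most $\vc$. Thus invoking the rooted approximation algorithm both on $G$ and on its reverse, from any root $r \notin S$, recovers a cut of weight at most $(1+\eps)\vc$. If we draw $r$ with probability proportional to $\weight{r}$, then $r \notin S$ with probability $(W-\vc)/W$, so $\apxO{W/(W-\vc)}$ weighted samples suffice with high probability. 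Since $\vc$ is unknown a priori, I would iterate over geometric guesses $\beta \in \{1, 2, 4, \ldots, W\}$ for $W - \vc$ and return the minimum cut found over all calls.

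For the guess $\beta$, the rooted algorithm is invoked $\apxO{W/\beta}$ times, each at cost $\apxO{m + n\beta/\eps}$, giving per-guess total $\apxO{Wm/\beta + nW/\eps}$. Summing over the $O(\log W)$ guesses yields $\apxO{Wm + nW/\eps}$. The main obstacle is the $Wm$ term: without sparsification $m$ can be as large as $n^2 \le nW$, producing $\apxO{nW^2/\eps}$ rather than the claimed $\apxO{nW/\eps}$. I would resolve this with a preliminary vertex-cut-preserving sparsifier that reduces the edge count to roughly $\apxO{n(W-\vc)/\eps}$ before the repeated rooted calls (for instance, random edge sampling guided by a coarse rooted estimate of $\vc$). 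Once $m$ is controlled in this way, $Wm/\beta$ collapses into $\apxO{nW/\eps}$ and the approximation bound follows. The correctness proof is a union bound over the $O(\log W)$ guesses, each with inverse-polynomial failure probability after $\log n$-fold amplification.

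The exact algorithm in $\apxO{\vc n W}$ time follows by the same sampling reduction but using the exact rooted algorithm of \reftheorem{simple-vc-rooted} (cost $\apxO{m + \vc n(W - \vc)}$ per call) in place of the approximate one; equivalently, one can run the approximation oracle with $\eps = 1/(W+1)$ to distinguish integer cut values. The unit-weight bound $\apxO{\vc n^2}$ is immediate from $W = n$. The hardest part of carrying this plan out is designing (or invoking) a sparsifier that both preserves vertex cuts of value around $\vc$ up to $(1 \pm \eps)$ and can be produced without first knowing $\vc$ — everything else is a standard sampling-plus-doubling argument over the black-box rooted subroutine.
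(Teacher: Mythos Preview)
Your high-level reduction --- sample $r$ by weight, run the rooted algorithm on $G$ and on its reverse, amplify --- is exactly the paper's approach, and you correctly identify the bottleneck: na\"ive repetition incurs an extra $\apxO{mL}$ term with $L \approx W/(W-\vc)$. The gap is your proposed fix. A ``preliminary vertex-cut-preserving sparsifier via random edge sampling'' is not something you can invoke here: random edge sampling preserves \emph{edge} cuts in \emph{undirected} graphs, and no analogous off-the-shelf sparsifier for directed vertex cuts is available. Leaving this as ``the hardest part'' is leaving the actual content of the proof unfilled.

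The paper removes the $mL$ term by a much more concrete and elementary observation. It opens the rooted subroutine slightly and notes that the $\bigO{m}$ cost inside \reftheorem{simple-vc-rooted} comes entirely from the rooted sparsification step (\reflemma{vertex-sparsification}): for a given threshold $k$, delete all incoming edges of every vertex with weighted in-degree at least $(1+1/\eps)k$, then add a single edge from $r$ to each such vertex. The deletion is \emph{independent of $r$} and can be performed once per value of $k$, costing $\apxO{m}$ total over the $\bigO{\log W}$ thresholds. Only the ``add an edge from $r$'' part depends on the sampled root, and that costs $\bigO{n}$ per root. Sharing the deletion across all $L$ roots replaces $\apxO{mL}$ by $\apxO{m + nL}$, which is absorbed into $\apxO{nW/\eps}$. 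No new sparsifier is needed.

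A smaller issue: for the exact bound, taking $\eps = 1/(W+1)$ yields $\apxO{nW^2}$, not $\apxO{\vc n W}$. The paper first runs the approximation with $\eps = 1/2$ to get a constant-factor estimate of $\vc$, and then reruns with $\eps \approx 1/\vc$.
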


\begin{table}[t]
  \everymath{\displaystyle}              %
  \renewcommand\tabularxcolumn[1]{m{#1}} %
  \centering
  \begin{tabularx}{\textwidth}{| >{\emstrut{1.35}{.75}}c | X |}
    \hline %
    $\bigO{n^2 \vctime{m,n}}$
    & Trivial. \\
    \hline %
    $\bigO{n \vctime{m,n} \log{n}}$ & Trivial.  Randomized. $\vc \leq .999n$ \\
    \hline %
    $\bigO{\vc n \vctime{m,n}}$ & Podderyugin \cite{podderyugin},
                                  Even and Tarjan \cite{even-tarjan}
    \\
    \hline %
    $\bigO{n^{\omega} + n \vc^{\omega}}$ & Cheriyan and Reif \cite{cheriyan-reif}.
    \\
    \hline %
    $\bigO{\vc m n}$, $\bigO{(\vc^3 + n) m}$ & Henzinger \etal \cite{ghr}.
    \\
    \hline %
    $\bigO{m n \log{n}}$ & Henzinger \etal \cite{ghr}. Randomized.
    \\
    \hline $\bigO{\min{\vc^{5/2}, \vc n^{3/4}} m + m n}$
    &
      Gabow \cite{gabow-06}.
    \\
    \hline $\apxO{\vc m^{2/3} n}$, $\apxO{\vc m^{4/3}}$
    &
      {Nanongkai \etal \cite{nsy}. Randomized.} %
    \\ \hline %
    $\apxO{m \vc^2}$, $\apxO{n \vc^3 + \vc^{3/2} m^{1/2} n}$
    &
      {Forster \etal \cite{fnsyy}. Randomized.} \\
    \hline %
    $\apxO{n^2 \vc}$ %
    & \refcorollary{simple-vc-global}. Randomized. %
    \\
    \hline %
  \end{tabularx}

  \caption{A table of running times for finding the minimum vertex cut
    in unweighted directed graphs (i.e., $W = n$). $\vctime{m,n}$
    denotes the running time of computing $(s,t)$-vertex connectivity,
    and is at most $\apxO{m + n^{1.5}}$ \cite{brand+}.  All randomized
    algorithms above are correct with high probability.  See also
    \cite[\S15.2a]{schrijver} and \cite{fnsyy}. \labeltable{simple-vc}}

  \graybar
\end{table}

%%% Local Variables:
%%% mode: latex
%%% TeX-master: "main"
%%% End:

%
%

\subsection{Key ideas}
Our algorithms are based on a simple but key idea that we briefly
outline below. We focus on the edge-connectivity case since the idea
for vertex connectivity is essentially the same with some
modifications.  We would like to take advantage of recent developments
on fast algorithms for $(s,t)$-cut and reduce to solving a small
number of such cut problems in a black box fashion (unlike the
approach of \cite{hao-orlin} based on the properties of a specific
flow algorithm). For undirected graph global connectivity there has
been very recent exciting progress by Li and Panigrahi
\cite{li-panigrahi} reducing to a \emph{logarithmic} number of
$(s,t)$-cuts. However, the technique makes strong use of the symmetry
of the edge-cut function which are absent in the directed graph
setting. In a different direction the work of Nanongkai, Saranurak,
and Yingchareonthawornchai \cite{nsy-19} and follow up improvements by
Forster \etal \cite{fnsyy}, developed fast algorithms for global
connectivity based on \emph{local connectivity} and
\emph{randomization}. At a high-level they use sampling to identify
two vertices $s,t$ on the opposite sides of a cut and then reduce to
$(s,t)$-cut, or they use a local-connectivity algorithm from each
vertex $v \in V$. This approach is particularly well-suited for
\emph{small} connectivity.

For directed graph edge connectivity Gabow's algorithm with running
time $\apxO{m \ec}$ is very good. In order to beat $O(n^3)$ in the
worst case, the bottleneck is the dense graph regime with high
connectivity. We have two main ideas that are particularly well suited
to this regime. First, we focus on the rooted case even though it may
appear to be more difficult than the global connectivity case. The
global connectivity can be much smaller than the rooted connectivity;
for instance the graph may not be strongly connected, in which case
the global edge connectivity is $0$, while the rooted connectivity for
a particular root can still be $\Omega(n)$. Consider rooted
connectivity from a given vertex $r$. In order to reduce to
$(s,t)$-cut we would like to find a node $t$ such that $t$ is the sink
side of a minimum $r$-cut. Let $T \subseteq V$ be a sink side of a
minimum $r$-cut and hence $\ec = \sizeof{\delta^{-}(T)}$; here
$\delta^-(T)$ denotes the set of edges entering $T$.  If $\sizeof{T}$
is large we can randomly sample a small number of vertices and we will
succeed with good probability in finding a vertex from $T$. Therefore
the difficult case is when $\sizeof{T}$ is small and this is the
setting in which we make our key observation: if the graph is
\emph{simple} (or edge capacities are small) and the sink side of a
minimum $r$-cut is small (but not a singleton!), then $T$ cannot have
a high-degree vertex. How can we take advantage of this? Since we are
working with the rooted problem, we can shrink all high-degree
vertices into the root $r$! In other words we can \emph{sparsify} the
graph if the sink side is small and compensate for the higher sampling
rate (and larger number of $(s,t)$-cut computations) we need to find a
vertex on the sink side.  Simple in retrospect, this tradeoff between
sparsification and sampling rate coupled with guessing the size of the
sink component gives us the overall algorithm with some additional
technical work.  We believe that our high-level idea will find use in
other contexts when combined with other techniques.

\paragraph{Recent related work.}
A recent and independent work of \cite{li+21} has obtained an
$\apxO{m n^{1-1/12 + o(1)}}$ time algorithm for vertex connectivity in
directed and unweighted graphs. We have not yet had time to digest and
make a proper comparison to \cite{li+21}.

Recent followup work by one of the authors has extended the ideas in
this work to obtain $\epsmore$-approximation algorithms for
\emph{weighted} graphs, for rooted and global, edge and vertex
connectivity, with $\apxO{n^2 / \eps^{\bigO{1}}}$ running times
\cite{quanrud-apx-cuts}.

%%% Local Variables:
%%% mode: latex
%%% TeX-master: "main"
%%% End:

\section{Edge connectivity}
\labelsection{edge-connectivity}

\labelsection{edge-overview}

In this section, we prove the main theorem for edge connectivity,
\reftheorem{main}. To this end, we will first introduce the main key
lemma, called the \emph{Rooted Sparsification Lemma}, in
\refsection{overview-sparsification}.  In \refsection{bucket}, we give
a lemma that applies the Rooted Sparsification Lemma to give a faster
algorithm when the number of vertices in the sink component is known
to be in a fixed interval between $1$ and $n$. \reftheorem{main} is
then proven in \refsection{main-proof}, applying the ideas from
\refsection{bucket} to each of a small family of intervals.

\subsection{The Rooted Sparsification Lemma for Edge Connectivity}

\labelsection{overview-sparsification}
\labelsection{edge-sparsification}

We introduce the key technical ingredient that we call the Rooted
Sparsification Lemma.  This lemma says that if the sink component of
the minimum $r$-cut is small, then unless it is a singleton cut (which
is easy to find directly), we can contract all vertices with high
in-degree into the root while preserving the minimum rooted cut
exactly. The result is a smaller and sparser graph in which we can
find the minimum rooted cut faster. Later we will see that the running
time saved by operating on a smaller graph makes up for the difficulty
in identifying a vertex from a smaller sink component.

\begin{restatable}{lemma}{SimpleSparsification}
  \labellemma{simple-sparsification}
  \setupGwr Let
  $k \in \naturalnumbers$. Consider the graph $\bar{G}$ obtained by
  contracting all vertices with weighted in-degree $\geq (1 + U) k$
  into $r$. Let $\bar{r}$ denote the contracted vertex in
  $\bar{G}$. Then we have the following.
  \begin{enumerate}
  \item $\bar{G}$ is a multigraph with less than $(1 + U) n k$ edges.
  \item If the minimum number of vertices in a sink component of a
    minimum $r$-cut has greater than 1 and less than or equal to $k$
    vertices, then the minimum $r$-cut and the minimum $\bar{r}$-cut
    are the same.
  \end{enumerate}
\end{restatable}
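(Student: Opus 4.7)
The plan is to prove Part 1 by a direct in-degree accounting on the non-root vertices of $\bar G$, and Part 2 by showing that the smallest minimum $r$-cut sink in $G$ must be disjoint from the set $C$ of contracted vertices.

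For Part 1, I would observe that contraction preserves the edges entering any non-contracted vertex $v$ (the tails are relabeled as $\bar r$ if they were in $C \cup \{r\}$, but the edges themselves and their weights are unchanged). Hence $d^-_{w,\bar G}(v) = d^-_{w,G}(v) < (1+U)k$ by the contraction rule, and since edge weights are positive integers, the number of edges of $\bar G$ with head $v$ is also strictly less than $(1+U)k$. Summing over the at most $n$ non-$\bar r$ vertices bounds the edges of $\bar G$ with head different from $\bar r$ by $n(1+U)k$. Edges entering $\bar r$ play no role in any $\bar r$-rooted cut (sink sets never contain $\bar r$), so they may be discarded without loss, yielding the stated bound.

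For Part 2, the easy direction is that every sink $S \subseteq V(\bar G) \setminus \{\bar r\}$, viewed as a subset of $V \setminus (C \cup \{r\})$ in $G$, has the same weighted boundary in both graphs; thus the minimum $\bar r$-cut value is at least the minimum $r$-cut value. For equality it suffices to exhibit a minimum $r$-cut in $G$ whose sink is disjoint from $C$. I would take $T^*$ to be a minimum $r$-cut sink of smallest cardinality, so by hypothesis $2 \leq |T^*| \leq k$, and show that $T^* \cap C = \emptyset$. Suppose toward contradiction that some $v \in T^*$ has $d^-_{w,G}(v) \geq (1+U)k$. Split the in-weight of $v$ as $b + d$, where $b$ is the weight from $V \setminus T^*$ into $v$ and $d$ is the weight from $T^* \setminus \{v\}$ into $v$. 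Since $G$ is simple with weights in $[U]$ and $|T^* \setminus \{v\}| \leq k-1$, we have $d \leq U(k-1)$, hence $b \geq (1+U)k - U(k-1) = k + U$.

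Now consider $T' = T^* \setminus \{v\}$: a direct boundary calculation yields $\delta^-_w(T') = \delta^-_w(T^*) + c - b$, where $c \leq U(k-1)$ is the weight of edges from $v$ into $T'$. Minimality of $|T^*|$ rules out $T'$ as a minimum cut, forcing $c \geq b + 1 \geq k + U + 1$; combined with $c \leq U(k-1) = Uk - U$ this reads $k + U + 1 \leq Uk - U$, which is immediately false for the simple unweighted case $U=1$ (where it collapses to $k+2 \leq k-1$) and produces the desired contradiction. The main obstacle is the last integer inequality in the $U \geq 2$ regime, where the naive bound becomes tight and one has to exploit integrality together with the joint structure of the in- and out-edges between $v$ and $T^* \setminus \{v\}$ more carefully to squeeze out the contradiction. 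Once it is established that $T^* \cap C = \emptyset$, $T^*$ realizes a $\bar r$-cut in $\bar G$ of value $\delta^-_w(T^*) = \ec$, and the two minimum cuts coincide.
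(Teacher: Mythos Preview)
Part 1 is fine and essentially the paper's argument.

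For Part 2 your route is genuinely different from the paper's. The paper does not remove a high-degree vertex from $T^*$ and compare $\delta^-_w(T^*\setminus\{v\})$ with $\delta^-_w(T^*)$. Instead it first proves, as a standalone lemma, that the cut value itself is small: if the smallest minimum sink has $1<|T|\le k$ vertices, then every vertex of $T$ has in-degree $>\lambda$, while at most $Uk(k-1)$ total in-weight over $T$ can come from inside $T$; hence $k\lambda < \lambda + Uk(k-1)$, i.e.\ $\lambda < Uk$. With this global bound in hand, the in-degree of any $v\in T$ is controlled directly: at most $\lambda$ weight enters from $V\setminus T$ (all such edges lie in the cut) and at most $U(|T|-1)$ from within $T$. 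No minimality-of-$|T^*|$ vertex-removal device appears.

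Your self-contained removal argument is clean for $U=1$, but the gap you flag for $U\ge 2$ is real and is not closed by the vague appeal to ``integrality together with the joint structure.'' The inequality $k+U+1\le U(k-1)$ is simply not a contradiction for, say, $U=2$ and $k=5$, and your comparison only controls the difference $c-b$, not $b$ and $c$ individually well enough to force a contradiction at the stated threshold. (In fact the paper's own written bound ``in-degree $<\lambda+k$'' is also tailored to $U=1$; for general $U$ the direct estimate is $\lambda + U(k-1) < U(2k-1)$, so the precise constant $(1+U)k$ in the statement is only sharp in the simple unweighted case.) Either way, the step that makes Part 2 go through is the \emph{global} bound $\lambda<Uk$ obtained by summing in-degrees over all of $T$, not a local comparison of $T^*$ with $T^*\setminus\{v\}$; you should replace your removal argument with that lemma.
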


\begin{figure}
  \centering

  \begin{minipage}{.5\paperwidth}
    \centering
    \includegraphics{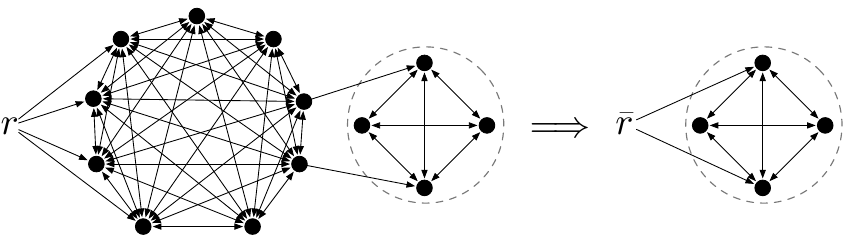}
    \caption{An example of the Rooted Sparsification Lemma in
      action. In particular, contracting the high in-degree vertices
      into $r$ leaves the sink component of the minimum $r$-cut
      intact.\labelfigure{rooted-sparsification}}
  \end{minipage}

  \bigskip

  \graybar
\end{figure}

Note that contraction cannot reduce the value of $r$-cut.  An example
illustrating the lemma is given in
\reffigure{rooted-sparsification}. The proof is in two steps.

\paragraph{Small sinks make small cuts (except for singletons).}
The first step towards the Rooted Sparsification Lemma for edge
connectivity is the following basic observation relating the
connectivity to the number of vertices in the sink component of a
minimum rooted cut. For simple graphs (i.e., $U = 1$), the following
lemma says that \emph{except for the case where the minimum rooted cut
  is achieved by a singleton}, the rooted connectivity is less than
the number of vertices in the sink component of the cut. With
capacities between $1$ and $U$, we obtain a similar inequality except
scaled by $U$. See \reffigure{small-sink-small-edge-cut} for an
illustration of the following lemma.

\begin{restatable}{lemma}{SimpleSink}
  \labellemma{small-sink-small-cut} \setupRC Let $k$ be the minimum
  number of vertices in a sink component of a minimum $r$-cut. Then
  either $k =1$ or $\lambda < U k$.
\end{restatable}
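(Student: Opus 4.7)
The plan is to prove the contrapositive: assuming $k \geq 2$, I will show $\lambda < Uk$. Fix any minimum $r$-cut whose sink side $T$ achieves $|T| = k$, and let $S = V \setminus T$, so that $\lambda = w(\delta^-(T))$. For each $v \in T$ split the in-weight of $v$ as $w(\delta^-(v)) = a(v) + b(v)$, where $a(v)$ is the total weight of edges from $S$ into $v$ and $b(v)$ is the total weight of edges from $T \setminus \{v\}$ into $v$.

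First I would combine two elementary bounds on the chosen $v$. By an averaging argument, since $\sum_{v \in T} a(v) = w(\delta^-(T)) = \lambda$ and $|T| = k$, some $v \in T$ satisfies $a(v) \leq \lambda/k$; fix this $v$. Separately, the hypothesis $k \geq 2$ means that no singleton is a minimum $r$-cut (else the minimum sink size would be $1$), so the trivial cut $\{v\}$ has weight strictly greater than $\lambda$, i.e.\ $w(\delta^-(v)) > \lambda$. Subtracting gives $b(v) > \lambda - \lambda/k = \lambda(k-1)/k$.

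Finally I would invoke the simplicity of $G$ together with the weight bound $U$: there are only $k - 1$ possible in-edges into $v$ originating in $T \setminus \{v\}$, each carrying weight at most $U$, so $b(v) \leq U(k - 1)$. Chaining the two bounds on $b(v)$ yields $U(k - 1) \geq b(v) > \lambda(k - 1)/k$, and dividing through by $k - 1 > 0$ gives $\lambda < Uk$. I do not anticipate any real obstacle; the only point needing care is recognizing that the hypothesis $k \geq 2$ precisely forbids singleton minimum cuts and thereby forces the \emph{strict} inequality $w(\delta^-(v)) > \lambda$, which is what supplies the slack that closes the calculation.
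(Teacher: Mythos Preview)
Your proof is correct and follows essentially the same approach as the paper: both arguments hinge on the fact that every vertex in the sink component $T$ has weighted in-degree strictly greater than $\lambda$ (since singleton cuts are ruled out when $k\geq 2$), combined with the simplicity bound that the weight of edges internal to $T$ is at most $U$ times the number of possible internal edges. The only cosmetic difference is that the paper sums the in-degree inequality over all of $T$ (obtaining $k\lambda < \lambda + k(k-1)U$), whereas you average to isolate a single vertex $v$ with small external contribution and then work with that one vertex; the two presentations are equivalent after dividing through by $k$.
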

\begin{restatable}{proof}{SimpleSinkProof}
  Let $T$ be the set of vertices on the sink-side of a cut with
  $\lambda$ edges. Suppose $k = \sizeof{T} > 1$. Every vertex in $T$
  has weighted in-degree $> \lambda$. Consider all edges with head in $T$.
  Because $G$ has capacities
  between $1$ and $U$, of all
  the edges with head in $T$, at most $k(k-1) U$ total weight have
  their tail in $T$ as well. Thus
  \begin{math}
    \lambda > k \lambda - k(k-1) U.
  \end{math}
  Rearranging, we have
  \begin{math}
    k(k-1) U > (k-1) \lambda,
  \end{math}
  hence $k U > \lambda$.
\end{restatable}

\begin{remark}
  The above argument is simple and (unsurprisingly) we realized that a
  similar line of reasoning has appeared in previous work
  \cite{mansour-schieber} (though towards a different algorithmic
  approach and not in the context of rooted connectivity).
\end{remark}

\begin{figure}
  \centering

  \begin{minipage}{.5\paperwidth}
    \centering \includegraphics{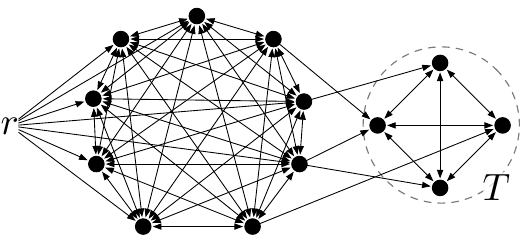}
    \caption{\small The set of vertices $T$ has 4 vertices and there
      are 5 edges crossing into $T$. \reflemma{small-sink-small-cut}
      implies that $T$ cannot be the sink component of the minimum
      $r$-cut. Indeed, there are singleton cuts of degree 4 inside
      $T$. \labelfigure{small-sink-small-edge-cut}}
  \end{minipage}

  \bigskip

  \graybar
\end{figure}

\paragraph{Small sinks are sparse sinks.}
We now prove the Rooted Sparsification Lemma,
\reflemma{simple-sparsification}. The high level argument is very
simple and we first give an informal argument to emphasize the
intuition. If the sink component of the minimum $r$-cut is small, then
by \reflemma{small-sink-small-cut}, the minimum $r$-cut is also
small. Suppose for the sake of discussion that the graph is simple
(i.e., $U = 1$). If both the minimum $r$-cut and the sink component
are small and the graph is simple, then every vertex in the sink
component has small in-degree. The contrapositive implies that every
high in-degree vertex is on the source side of the cut. Thus the high
in-degree vertices can be safely contracted into the root.

\begin{proof}[Proof of the rooted sparsification lemma]
  Recalling the statement of the lemma, it is easy to see that
  contracting all vertices with weighted in-degree $\ge (1+U)k$ into
  $r$ results in a multigraph $\bar{G}$ in which every vertex has
  weighted in-degree $< (1 + U) k$, and hence there are at most
  $(1 + U) n k$ edges total.

  Let $T$ be the sink component of a minimum $r$-cut.  Observe that
  contracting into $r$ cannot decrease the edge connectivity. If one
  can show that no vertices in $T$ are contracted into $\bar{r}$, then
  $T$ is the sink component of a minimum $\bar{r}$-cut as well.

  By \reflemma{small-sink-small-cut}, the minimum $r$-cut has size
  $\lambda < U k$. Because $G$ is simple and $T$ has $\leq k$
  vertices, every vertex in $T$ has in-degree less than
  $\lambda + k < (1 + U) k$. Thus any contracted vertex is outside
  of $T$. This completes the proof.
\end{proof}

\subsection{Rooted connectivity for a fixed range of component sizes}

\labelsection{bucket}

Applying the Rooted Sparsification Lemma usefully requires a fairly
tight upper bound on the number of vertices in the sink component of
the minimum $r$-cut. In this section, we assume we are given a lower
bound $k_1$ and an upper bound $k_2$ on the number of vertices in the
sink component, and develop algorithms for the minimum rooted cut in
this parametrized regime. The running times are decreasing in $k_1$
and increasing in $k_2$; that is, they are better for tighter bounds
on the number of vertices in the sink component.

\begin{lemma}
  \labellemma{known-component-size}
  \setupRC %
  Let $k_1, k_2 \in \naturalnumbers$ with $1 \leq k_1 \leq k_2 \leq n$. Suppose the
  sink component of the minimum $r$-cut has between $k_1$ and $k_2$
  vertices. Then the minimum $r$-cut can be computed with constant
  probability in
  \begin{align*}
    \bigO{m + \frac{n}{k_1} \parof{\edgeconn{\min{m,n k_2 U}}{n}}} \text{ time.}
  \end{align*}
\end{lemma}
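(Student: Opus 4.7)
The plan is to reduce to a small family of $(s,t)$-cut computations on a sparsified graph, using Lemma~\ref{lemma:simple-sparsification} to shrink the edge count and uniform random sampling to locate a vertex on the sink side of the minimum $r$-cut.

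The first move is to handle the corner case where the sink component of the minimum $r$-cut is a singleton, which Lemma~\ref{lemma:simple-sparsification} does not cover (its guarantee requires the sink to have more than one vertex). Compute the minimum weighted in-degree $d^{\ast} := \min_{v \in V - r} \deg^{-}(v)$ in $\bigO{m}$ time and retain $d^{\ast}$ as a candidate cut value; this captures every min $r$-cut whose sink is a single vertex. Then apply Lemma~\ref{lemma:simple-sparsification} with parameter $k := k_2$ to obtain a multigraph $\bar{G}$ on at most $\min\{m,\, (1 + U) n k_2\}$ edges, with contracted root $\bar{r}$. When the sink $T$ of the minimum $r$-cut has $|T| \geq 2$, Lemma~\ref{lemma:simple-sparsification} guarantees that the minimum $\bar{r}$-cut in $\bar{G}$ equals the minimum $r$-cut in $G$.

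To localize the cut in $\bar{G}$, sample $s = \Theta(n / k_1)$ vertices $t_1, \dots, t_s$ independently and uniformly at random from $V - r$. Because $|T| \geq k_1$, each sample lies in $T$ with probability at least $k_1 / n$, so by choosing the hidden constant large enough we get some $t_i \in T$ with constant probability. For each $t_i$, invoke the $(s,t)$-cut oracle on $\bar{G}$ in $\edgeconn{\min\{m,\, n k_2 U\}}{n}$ time. Report the minimum of $d^{\ast}$ and the $s$ computed $(\bar{r}, t_i)$-cut values. When some $t_i \in T$ and $|T| \geq 2$, the corresponding $(\bar{r}, t_i)$-cut in $\bar{G}$ equals $\lambda$ exactly; when $|T| = 1$, the singleton case is captured by $d^{\ast}$. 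Summing $\bigO{m}$ preprocessing with $\bigO{n / k_1}$ flow calls on $\bar{G}$ yields the claimed running time.

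The only delicate point is that Lemma~\ref{lemma:simple-sparsification} may legitimately destroy a singleton minimum cut, because the sole sink vertex could have in-degree exceeding the $(1 + U) k_2$ threshold and get absorbed into $\bar{r}$; the preliminary in-degree scan neutralizes this at negligible cost. Everything else is a direct combination of the Rooted Sparsification Lemma with a textbook sampling argument for hitting a set of size $\geq k_1$ in a universe of size $n$, so no further technical obstacles arise.
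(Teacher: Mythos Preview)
Your proposal is correct and follows essentially the same approach as the paper: apply the Rooted Sparsification Lemma with parameter $k_2$, sample $\Theta(n/k_1)$ vertices uniformly to hit the sink component with constant probability, and handle the singleton case separately via an in-degree scan. The only cosmetic difference is that the paper splits into the cases $k_1 > 1$ and $k_1 = 1$ (checking singletons only in the latter), whereas you always run the singleton check up front; this does not affect correctness or the running time.
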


\begin{proof}
  We first consider the case $k_1 > 1$.  By
  \reflemma{simple-sparsification}, we can reduce the number of edges
  to $m'= \bigO{k_2 n U}$ while preserving the $r$-cut and retaining
  all $k_1$ or more vertices in the sink-side component.  Let us
  sample $\bigO{n / k_1}$ sink vertices $t$ in the remaining graph
  uniformly at random, and compute the minimum $(r,t)$-cut for
  each. This takes
  \begin{math}
    \edgeconn{\min{m,m'}}{n} = \edgeconn{\min{m,k_2 n U}}{n}
  \end{math}
  time for each instance, as desired.  With constant probability, at
  least one sink is sampled out of the sink component of the minimum
  $r$-cut, which will return the minimum $r$-cut.

  If $k_1 = 1$, then we must also address the possibility of a
  singleton cut. We apply the above for $k_1 = 2$ and compare the
  output to all of the singleton $r$-cuts, and output the smallest of
  these cuts.
\end{proof}

\subsection{Rooted connectivity for small sink components}

\begin{lemma}
  \labellemma{ec-small-sink} \setupGwintr
  Let $k \in \naturalnumbers$ be a
  given parameter.
  There is a deterministic algorithm that runs in
  \begin{math}
    \bigO{m + n k^2 U^2 \log{\max{n/kU}}}
  \end{math}
  time and returns an $r$-cut with the following guarantee.  If the
  sink component of a minimum $r$-cut has at most $k$ vertices, then
  the algorithm will return a minimum $r$-cut.
\end{lemma}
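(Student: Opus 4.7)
The plan is to combine the Rooted Sparsification Lemma (\reflemma{simple-sparsification}) with a direct, deterministic black-box call to Gabow's rooted connectivity algorithm on the sparsified graph, and to handle the singleton-sink case separately. Unlike \reflemma{known-component-size}, which uses random sampling to guess a vertex inside the sink side, here we will avoid sampling entirely by running a full minimum rooted cut algorithm on the (much smaller) contracted graph.

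First I would make a single linear pass over the edges to compute the weighted in-degree of every vertex and record $\sigma := \min_{v \ne r}\delta^-(v)$, the weight of the best singleton $r$-cut; this costs $O(m)$ time. Next I would construct $\bar G$ by contracting every vertex of weighted in-degree at least $(1+U)k$ into $r$. By \reflemma{simple-sparsification}, $\bar G$ is a multigraph with fewer than $(1+U)nk$ edges in which every non-root vertex has weighted in-degree strictly less than $(1+U)k$. I would then call Gabow's deterministic rooted connectivity algorithm on $\bar G$ and return the smaller of its output and $\sigma$ (lifting the contracted cut back to $G$ in the obvious way).

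For correctness, let $k^\star$ denote the number of vertices in the sink component of some minimum $r$-cut of $G$, and assume the hypothesis $k^\star \leq k$. If $k^\star = 1$ the singleton scan already finds an optimal cut; if $2 \leq k^\star \leq k$, \reflemma{simple-sparsification} guarantees that a minimum $r$-cut of $G$ is also a minimum $\bar r$-cut of $\bar G$, so Gabow's output is optimal. In either case the returned value equals $\lambda$, as required. (If the hypothesis fails the algorithm still returns some valid $r$-cut, which is all the lemma asks.)

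The step I expect to be the main obstacle is bounding Gabow's running time on $\bar G$ \emph{unconditionally}, since we want to run Gabow blindly even when the hypothesis on $k$ may fail. Gabow's algorithm costs $O(m'\lambda(\bar G)\log(n^2/m'))$, and in general contraction can raise the connectivity. The key observation saving us is that the sparsification imposes a structural degree cap: every non-root vertex of $\bar G$ has weighted in-degree below $(1+U)k$, and since any single-vertex in-cut upper bounds $\lambda(\bar G)$, we get $\lambda(\bar G) < (1+U)k$ for free. Combined with $m' = O(nkU)$, this yields $O(nkU \cdot kU \cdot \log(n/(kU))) = O(nk^2U^2\log(n/(kU)))$, which together with the $O(m)$ scanning and contraction cost matches the claimed running time.
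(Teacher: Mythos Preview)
Your proof is correct and follows essentially the same approach as the paper: apply the Rooted Sparsification Lemma, run Gabow's algorithm on the sparsified graph, and compare with the best singleton cut. The one small difference is in how you justify the running-time bound for Gabow: the paper runs Gabow with an implicit cutoff at $kU$ (``either finds a minimum rooted cut or certifies that the $r$-cut value in the sparsified graph has value $\geq kU$''), whereas you observe directly that the degree cap forces $\lambda(\bar G) < (1+U)k$, so no cutoff is even needed. Your observation is arguably cleaner, but both arguments yield the same $O(nk^2U^2\log(n/kU))$ bound.
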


\begin{proof}
  If the sink side of the minimum cut has less than $k$ vertices, then
  via \reflemma{small-sink-small-cut}, either a singleton induces a
  minimum $r$-cut, or the minimum $r$-cut has size $\lambda < U
  k$. For the latter case, we apply the rooted sparsification lemma
  and reduce the graph to $\bigO{n k U}$ edges while preserving the
  minimum $r$-cut. We apply Gabow's algorithm \cite{gabow} to the
  sparsified graph and it runs $\bigO{n k^2 U^2 \log{\max{1,n/kU}}}$
  time, and either finds a minimum rooted cut or certifies that the
  $r$-cut value in the sparsified graph has value $\ge kU$. We compare
  the output with all singleton $r$-cuts.
\end{proof}

\subsection{Algorithm for rooted edge connectivity}

\labelsection{main-proof} We now prove the main theorem for edge
connectivity, \reftheorem{main}.  By \reflemma{known-component-size},
if the number of vertices in the sink component is known, then we can
reduce very efficiently to $(s,t)$-connectivity by either sparsifying
the graph (if the number is small) or easily guessing a sink (if the
number is large). More generally, we can pursue both strategies
relative to any given upper and lower bounds on the number of vertices
in the sink component. Meanwhile, for small component sizes (that are
not singletons), we can still sparsify the graph, while the cut size
must be small, which combine to produce fast running times via
\cite{gabow} in \reflemma{ec-small-sink}. The only unknown is the
number of vertices in the sink component. Here we guess the number of
vertices up to a constant factor, which only requires enumerating a
logarithmic number of guesses. We restate \reftheorem{main} for the
sake of convenience.

\QuadraticEdgeConnectivity*

\begin{proof}
  Let $\ell \in [n]$ be a parameter to be determined.  The sink
  component of the minimum $r$-cut either (a) is a singleton, (b) has
  at most $\ell$ vertices, or (c) has between $2^i$ and $2^{i+1}$
  vertices for some $i \geq \logdown{\ell}$. For each of these categories we
  apply a subroutine and take the minimum of the cut values found.

  Singleton cuts are easy to evaluate in $O(m)$ time.  Let
  $i_0 = \logdown{\ell}$ and $i_1 = \max{\logup{m/nU}, i_0 + 1}$. For
  $i = i_0, \dots, i_1 - 1$, let $k_i = 2^i$. Let $k_{i_1} = n$.  For
  (b) we invoke \reflemma{ec-small-sink} with maximum sink component
  $k_{i_0}$.  To address (c), for each $i = i_0,\dots, i_1 - 1$, we
  invoke \reflemma{known-component-size} $\bigO{\log n}$ times with
  lower bound $k_i$ and upper bound $k_{i+1}$ on the number of
  vertices in the sink component. We use
  $\ectime{m}{n} = \apxO{m + n^{1.5}}$ \cite{brand++}.  The combined
  running time is
  \begin{math}
    \apxO{n^2 U + \frac{n^{2.5}}{\ell} + n \ell^2 U^2}.
  \end{math}
  For $\ell = \sqrt{n} / U$, this gives the claimed running time.
\end{proof}

%%% Local Variables:
%%% mode: latex
%%% TeX-master: "main"
%%% End:

\section{Rooted and global vertex connectivity}

In this section, we describe and analyze the approximation algorithms
for rooted and global vertex connectivity. The high-level approach is
similar to the previously discussed algorithm for edge connectivity.
The first step, \reflemma{vertex-sparsification}, is a variant of the
Rooted Sparsification Lemma that applies to (approximate) vertex
connectivity. It plays a similar role as its counterpart for edge
connectivity, allowing one to sparsify the graph when the sink
component of the minimum rooted vertex cut is small. The proof of
\reflemma{vertex-sparsification} is given
\refsection{vertex-sparsification}.  We then give an algorithm
specific to (roughly) the number of vertices in the sink component in
\refsection{parametrized-vertex-cut}. We use this algorithm as a
subroutine in the final algorithm for approximate rooted connectivity
in \refsection{rooted-vc}.  In
\refsection{global-vertex-connectivity}, we give the reduction from
approximate global vertex connectivity to approximate rooted vertex
connectivity.  The exact global vertex connectivity algorithm for
integer weights follows from an appropriate choice of error parameter.

\subsection{Rooted sparsification for approximate vertex
  connectivity}

\labelsection{vertex-sparsification}

Recall that a key idea in the algorithm for (rooted) edge connectivity
was the Rooted Sparsification Lemma, which allows us to substantially
decrease the number of edges when the sink component of the minimum
rooted cut is small.  Underlying the rooted sparsification lemma for
edge connectivity was a direct relation between the size of the sink
component and the weight of the minimum edge cut ---
\reflemma{small-sink-small-cut} in
\refsection{edge-sparsification}. But this relation does not hold for
vertex connectivity, even in unweighted and undirected graphs -- even
if the sink component is small, the vertex in-cut can be very large.
For example, for arbitrarily large $n$ and any fixed constant $k$, let
$S = K_n$ be a clique of size $n$ and let $T = K_k$ be a clique of
size $k$. Add edges between all $s \in S$ and all $t \in T$. Let $r$
be an additional root vertex connected to every vertex in $S$. Then
$T$ is the sink component of the minimum vertex $r$-cut. It has a
constant number of vertices, $k$, while the size of the vertex cut,
$n$, is arbitrarily large.
\begin{center}
  \includegraphics{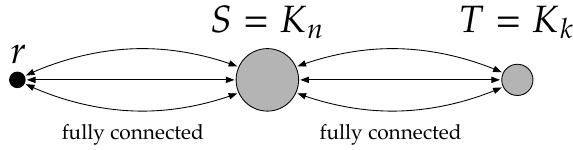}
\end{center}

That said, we show that a useful sparsification is possible if we
relax to \emph{approximating} the rooted vertex connectivity, and
qualify the lemma by the assumption that no singleton cut already
represents a good approximation. To this end, let $u,v \in V$. We say
that $u$ is an \defterm{in-neighbor} of $v$ if $(u,v) \in E$. We
denote the set of in-neighbors of a vertex $v$ by
\begin{math}
  \inneighbors{v}               %
  \defeq                             %
  \setof{u \in V \where (u,v) \in E}.
\end{math}
The definition of in-neighbors naturally extends
to sets of vertices; for $S\subset V$ we define
$\inneighbors{S} = (\cup_{v \in S} \inneighbors{v}) \setminus S$.
The \defterm{weighted in-degree} of $v$ is defined as the total weight
of all in-neighbors of $v$. Similarly we define
the set of \defterm{out-neighbors} of a vertex $v$, denoted $\outneighbors{v}$, as
\begin{math}
  \outneighbors{v}
  \defeq \setof{u \in V \where (u,v) \in E},
\end{math}
and the weighted out-degree of $v$, denoted $\outdegree{v}$, as the
sum of weights over $\outneighbors{v}$.

Our first lemma gives an approximate relationship between the weight
of the minimum weight rooted vertex cut and the weight of the sink
component of the minimum weight rooted vertex cut.

\begin{restatable}{lemma}{RootedSmallComponentSmallVertexCut}
  Let $\eps > 0$ be fixed.  \labellemma{small-sink-small-vertex-cut}
  \setupGrvc Suppose the in-neighborhood of every non-root vertex has total
  weight greater tha $\epsmore \vc$.  Then the minimum vertex $r$-cut
  has more than $\eps \vc$ weight in the sink component.
\end{restatable}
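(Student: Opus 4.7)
The plan is a one-shot application of the hypothesis to an arbitrary vertex in the sink component. Fix a minimum rooted vertex cut and write $V = \{r\} \sqcup S \sqcup C \sqcup T$, where $S$ is the source side (vertices reachable from $r$ after deleting $C$), $C$ is the set of cut vertices with $w(C) = \vc$, and $T$ is the sink component (vertices not reachable from $r$ after deleting $C$). By definition of a vertex $r$-cut, $T$ is nonempty, so we may pick some $t \in T$.

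The key structural observation is that $\inneighbors{t} \subseteq (T \setminus \{t\}) \cup C$: any edge into $t$ must come from a vertex from which $t$ is reachable without passing through $C$, so its tail lies in $T$ or in $C$ itself (it cannot be $r$ or lie in $S$, since then removing $C$ would leave $r \rightsquigarrow t$, contradicting $t \in T$). Consequently
\[
  w(\inneighbors{t}) \;\leq\; w(T) - w(t) + w(C) \;=\; w(T) + \vc - w(t).
\]
Combining with the hypothesis $w(\inneighbors{t}) > (1+\eps)\vc$ gives
\[
  w(T) \;>\; \eps \vc + w(t) \;>\; \eps \vc,
\]
where the last inequality uses positivity of vertex weights. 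This yields the conclusion.

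There is essentially no obstacle beyond being careful about what the ``sink component of the minimum $r$-cut'' means and verifying the containment $\inneighbors{t} \subseteq (T \setminus \{t\}) \cup C$. The lemma is tight in the sense that the slack comes entirely from $w(t)$, so we really do need the hypothesis to be strict (i.e.\ $> (1+\eps)\vc$ rather than $\geq$) and the weights to be strictly positive; both are standing assumptions of the setup.
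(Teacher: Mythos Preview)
Your argument is correct and is essentially identical to the paper's proof: pick any vertex in the sink component, observe its in-neighbors lie in the sink or the cut, and compare weights. The only difference is notational (you write the partition $\{r\}\sqcup S\sqcup C\sqcup T$ explicitly, whereas the paper takes the cut as $\inneighbors{S}$ with $S$ the sink), and you make the slack term $w(t)$ explicit where the paper notes an extra $+1$ from the minimum vertex weight.
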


\begin{proof}
  Let the minimum $r$-cut be of the form $\inneighbors{S}$, where
  $S \subseteq V - r$. To prove the claim it suffices to show that
  $w(S) > \eps \vc$.

  For any vertex $v \in S$, by assumption, total weight of
  in-neighbors is more tha $\epsmore \vc$. At most $\vc$ weight of
  these in-neighbors are in the minimum vertex $r$-cut,
  $\inneighbors{S}$. This implies that $v$ has more than $\eps \vc$
  weight of in-neighors in $S$, and hence
  $\sum_{s \in S} \weight{s} > \eps \vc + 1$ (where the extra $1$ is
  for the weight of $v$).
\end{proof}

\begin{figure}
  \centering                    %
  \includegraphics{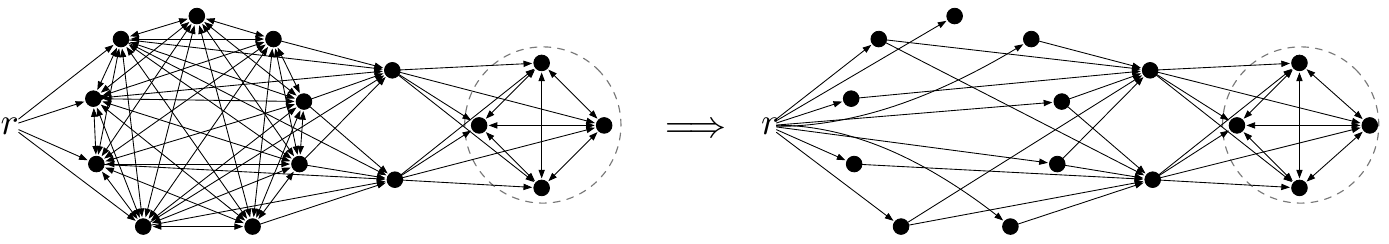} %
  \caption{An example of the Rooted Sparsification Lemma for vertex
    connectivity in action. In the input graph on the left, minimum
    vertex $r$-cut has size 2 and the sink component (circled) has 4
    vertices. The minimum in-degree (other than $r$) is 5. On the
    right hand side, all vertices with in-degree $\geq 9$ have all
    their incoming edges replaced with a single edge from $r$. The
    minimum vertex $r$-cut is again 2 and the sink-component of the
    minimum $r$-cut remains unchanged.}

  \graybar
\end{figure}

\begin{restatable}{lemma}{ApxSimpleVertexSparsification}
  \labellemma{vertex-sparsification} Let $\eps > 0$ be
  fixed. \setupGrvc \setupInDegree Let $k \in
  \naturalnumbers$. Consider the graph $\bar{G}$ obtained by
  replacing, for each vertex $v \in V$ with weighted in-degree
  $\geq \parof{1 + 1/\eps} k$, all of the in-coming edges to $v$ with
  a single edge from $r$ to $v$. Then we have the following.
  \begin{enumerate}
  \item $\bar{G}$ has maximum weighted in-degree at most $(1 + 1/\eps)
    nk$.
  \item $\bar{G}$ has at most $\parof{1+1/\eps} nk$ edges.
  \item If the sink component of a minimum vertex $r$-cut in $G$ has
    weight $\leq k$, then the minimum vertex $r$-cut in $G$ and
    $\bar{G}$ are the same.
  \end{enumerate}
\end{restatable}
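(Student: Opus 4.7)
The plan is to prove parts (1) and (2) directly from the construction, and to establish (3) by showing that no vertex in the sink component of a minimum $r$-cut in $G$ is ever modified. Parts (1) and (2) are straightforward from the modification rule: any unmodified vertex has weighted in-degree less than $(1+1/\eps)k$ (and, in a simple graph with integer weights $\geq 1$, contributes at most that many incoming edges), while any modified vertex has only a single incoming edge from $r$; the stated bounds on maximum in-degree and total edge count then follow by summing over the $n$ vertices.

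The heart of the argument is (3). I first claim that every vertex $v$ in a sink component $S$ of weight at most $k$ of the minimum $r$-cut in $G$ has weighted in-degree strictly less than $(1+1/\eps)k$ and is therefore not modified. The in-neighbors of $v$ lie in the disjoint union $S \cup \inneighbors{S}$, so the weighted in-degree of $v$ is at most $\sum_{s \in S}\weight{s} + \vc \leq k + \vc$. Under the standing assumption that every non-root vertex has in-neighborhood weight greater than $(1+\eps)\vc$, \reflemma{small-sink-small-vertex-cut} yields $\sum_{s \in S}\weight{s} > \eps\vc$, whence $\vc < k/\eps$. Combining these two bounds gives weighted in-degree of $v$ strictly less than $k + k/\eps = (1+1/\eps)k$, as needed.

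With this in hand, the two minimum $r$-cuts coincide. In one direction, the original cut $\inneighbors{S}$ remains valid in $\bar{G}$ with the same weight, since the in-edges at the unmodified vertices in $S$ are untouched by the construction. In the other direction, any minimum $r$-cut in $\bar{G}$ has a sink component that contains no modified vertex, because every modified vertex has a direct edge from $r$ in $\bar{G}$ and so remains reachable from $r$ after any vertex deletion; hence the same set of in-neighbors is a valid $r$-cut in $G$ of equal weight. The main obstacle is the arithmetic alignment of the two thresholds $(1+\eps)\vc$ (from the in-degree hypothesis) and $(1+1/\eps)k$ (from the construction): it is only because $\sum_{s \in S}\weight{s} \leq k$ combines with $\vc < k/\eps$ to squeeze $\sum_{s \in S}\weight{s} + \vc$ strictly below $(1+1/\eps)k$ that the construction preserves the sink component, and this is exactly the calibration built into the lemma's choice of threshold.
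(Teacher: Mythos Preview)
Your proof is correct and follows essentially the same route as the paper: both arguments show that vertices in the sink component $T$ have weighted in-degree strictly below $(1+1/\eps)k$ (via \reflemma{small-sink-small-vertex-cut} to obtain $\vc < k/\eps$, then bounding the in-degree of $v\in T$ by $w(T)+\vc$), hence are never modified, so the original cut survives in $\bar{G}$. The only difference is in the converse direction: the paper argues per modified vertex via a two-step ``add $(r,v)$, then delete the other in-edges'' thought experiment to show no rooted connectivity decreases, whereas you observe directly that every modified vertex is adjacent to $r$ in $\bar{G}$ and therefore cannot lie in any sink component, so any $\bar{G}$-cut's sink has unchanged in-edges and yields a $G$-cut of equal weight. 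Your formulation is a bit more concrete and avoids the (mildly informal) step in the paper's argument that removing non-root in-edges to a vertex already joined to $r$ cannot lower other rooted connectivities.
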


\begin{proof}
  Let $T$ be the sink component of a minimum rooted $r$-vertex cut, of
  minimum weight among such sink components. Suppose $T$ has weight
  less than or equal to $k$.  By
  \reflemma{small-sink-small-vertex-cut}, $\vc < k / \eps$. Therefore
  any vertex $v$ with weighted in-degree greater than
  $\parof{1 + 1 / \eps} k$ cannot be in $T$ of the minimum rooted
  $r$-vertex cut.  We claim that replacing the incoming edges to $v$
  does not decrease rooted vertex connectivities for $r$. As a thought
  experiment, suppose we make the replacement over two steps, where we
  first add the edge from $r$ to $v$, and then remove the other
  incoming edges to $v$. The first step does not decrease vertex
  connectivities, and forces the rooted vertex connectivity from $r$
  to $v$ to be $+\infty$. Removing the other incoming edges to $r$
  does not effect the connectivity from $r$ to $v$, so no other vertex
  connectivities from $r$ are effected either. Over the two steps,
  then, we see that no rooted connectivities from $r$ decrease.

  On the other hand, since $v$ is not in the sink of the minimum
  vertex $r$-cut, the rooted vertex connectivity of $r$ does not
  change.
\end{proof}

\subsection{Rooted vertex connectivity parametrized by sink component size}

\labelsection{parametrized-vertex-cut}

We now give an algorithm for rooted vertex connectivity parametrized
by the weight of the vertices in the sink component of the minimum rooted
cut. More precisely, we take as additional input two weights
$k_1 \leq k_2$ and assume the sink component has weight between
$k_1$ and $k_2$.

In the following, let $\vctime{m}{n}$ be the running time for vertex
$(s,t)$-cut.

\begin{lemma}
  \labellemma{fixed-sink-size-vertex-cut} Let $\eps > 0$ be fixed.
  \setupGrvc Let $W = \sum_{v \in V} w(v)$ be the total weight in the
  graph. \setupInDegree Let $k_1,k_2 \in \naturalnumbers$ with
  $0 < k_1 < k_2$.  Suppose also that the sink component of the
  minimum $r$-cut has between $k_1$ and $k_2$ total weight. Then the
  minimum $r$-cut can be computed with constant probability in
  \begin{align*}
    \bigO{m + \prac{W - \vc}{k_1} \vctime{\min{m, \frac{k_2 n}{\eps}}, n}}
  \end{align*}
  randomized time.
\end{lemma}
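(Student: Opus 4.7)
The plan is to replay the argument of the edge-connectivity analogue, Lemma \ref{lemma:known-component-size}, in the vertex setting: combine the vertex Rooted Sparsification Lemma (Lemma \ref{lemma:vertex-sparsification}) with weighted random sampling over candidate sinks to reduce to a small number of $(s,t)$-vertex-cut computations. The algorithm has two phases, one exploiting the upper bound $k_2$ to sparsify, and one exploiting the lower bound $k_1$ to locate a sink witness.

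For the sparsification phase, apply Lemma \ref{lemma:vertex-sparsification} with parameter $k = k_2$ to obtain a multigraph $\bar{G}$ with at most $O(k_2 n / \eps)$ edges. Because the sink component of the minimum rooted vertex cut has weight at most $k_2$, part~(3) of that lemma guarantees that the minimum vertex $r$-cut in $\bar{G}$ agrees with that of $G$. Constructing $\bar{G}$ takes $O(m)$ time, accounting for the leading $m$ term, and each $(r,t)$-vertex-cut computation in $\bar{G}$ costs $\vctime{\min{m, k_2 n / \eps}}{n}$.

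For the sampling phase, draw $N = \Theta((W - \vc)/k_1)$ candidate sinks independently from $V \setminus \{r\}$, each with probability proportional to weight, restricted to a sample pool of total weight at most $W - \vc$ (obtained by excluding a $\vc$-weight set of vertices certified to lie on the source side of every minimum rooted cut). For each sampled $t$, compute the minimum $(r,t)$-vertex-cut in $\bar{G}$ and output the smallest cut seen. Because the true sink $S$ has weight at least $k_1$ inside this sample pool of weight at most $W - \vc$, each sample lands in $S$ with probability at least $k_1 / (W - \vc)$, so $N$ samples hit $S$ with constant probability; conditioned on such a hit, the corresponding $(r,t)$-vertex-cut equals the minimum rooted $r$-cut, so taking the minimum across samples is correct.

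The main subtlety is shaving the naive $W/k_1$ sample bound down to $(W - \vc)/k_1$: this requires identifying a $\vc$-weight set of vertices certified to lie on the source side of every minimum rooted cut that can be safely excluded from the sampling distribution. A natural candidate is $r$'s out-neighbors in $\bar{G}$, which cannot lie in the sink component of any minimum rooted cut; alternatively, in the outer algorithm for rooted vertex connectivity, such certificates can be accumulated from $(r,v)$-cuts computed in earlier rounds. Summing the $O(m)$ sparsification cost and the $N$ cut-query cost then yields the claimed running time.
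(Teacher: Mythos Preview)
Your proposal is correct and matches the paper's proof: sparsify via the vertex Rooted Sparsification Lemma with parameter $k_2$, then sample $O((W-\vc)/k_1)$ vertices by weight from $V \setminus (\{r\}\cup N^+(r))$ and compute $(r,t)$-vertex cuts in the sparsified graph. The paper resolves your ``main subtlety'' exactly as you suggest, by excluding $r$'s out-neighbors; the one fact you leave implicit is that $N^+(r)$ is itself a vertex $r$-cut and hence has total weight at least $\vc$, which is precisely what bounds the sample pool by $W-\vc$.
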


\begin{proof}
  By \reflemma{vertex-sparsification}, in $\bigO{m}$ time, we can
  reduce the number of edges to at most $\bigO{k_2 n / \eps}$ without
  decreasing the rooted vertex connectivity. We sample vertices from
  $V' = V \setminus \parof{\setof{r} \cup \outneighbors{r}}$. Note
  that $V'$ has weight at most $W - \outdegree{r} \leq W - \vc$.

  We sample
  $\bigO{(W - \outdegree{r}) /k_1} \leq \bigO{(W - \vc) / k_1}$
  vertices
  \begin{math}
    t \in V'
  \end{math}
  independently in proportion to their weight.  For each sampled
  vertex $t$, we compute the minimum $(r,t)$-vertex cut in the
  sparsified graph. With constant probability, one of these vertices
  $t$ is in the sink component of the minimum vertex $r$-cut, and the
  minimum vertex $(r,t)$-cut is the minimum vertex $r$-cut.
\end{proof}
\begin{remark}
  The simple observation that the weight of $\outneighbors{r}$ is at
  least $\vc$ is from \cite{ghr}.
\end{remark}

\subsection{Rooted vertex connectivity for small sink
  components}

\providecommand{\involume}{\fparnew{\operatorname{vol}^{-}}}
\providecommand{\splitG}{G_{\text{s}}}
\providecommand{\splitT}{T_{\text{s}}}
\providecommand{\vin}{v^-}%
\providecommand{\vout}{v^+}%
\providecommand{\uin}{u^-}%
\providecommand{\uout}{u^+}     %
\providecommand{\ecB}{\ec^*}    %
\providecommand{\ellB}{\ell^*}    %
\providecommand{\revG}{G_{\text{rev}}}

This section develops an approximation algorithm for rooted vertex
connectivity specifically for the case where the sink component has
small weight. The algorithm takes an upper bound $k$ on the weight of
the sink component, and guarantees an approximate minimum cut when
there is a minimum rooted vertex cut where the sink component has
weight at most $k$. The approach is inspired by the recent local
connectivity algorithm of \cite{fnsyy}, and also integrates the rooted
sparsification lemma. This algorithm is developed in two steps. The
first step is a local cut algorithm that, given a vertex $t \in V$,
searches for a small $(r,t)$-cut around $t$ in time proportional to a
given upper bound on the weight of the sink component. The second step
first applies the rooted sparsification lemma, finds a vertex $t$ in
the sink component by random sampling, and runs the local cut
algorithm for this choice of $t$.

The following lemma, which describes the local cut algorithm, is
nearly identical to \cite{fnsyy} except for two small
modifications. First, we work with integral capacities, which does not
change any arguments. Second is the inclusion of the root $r$ which we
want to keep on the opposite side of the local cut. The proof is
included for the sake of completeness. In the following, the
\defterm{in-volume} of a set of vertices $T$ in a directed,
edge-capacitated graph is the sum of weighted in-degrees over all
vertices in $T$. Similarly the out-volume is defined as the sum of
weighted out-degrees.

\begin{lemma}
  \labellemma{rooted-local-cut} Let $G = (V,E)$ be a directed graph
  with integral edge capacities. We assume that $G$ is already
  available in memory in adjacency list format.  Let $r, t \in V$, and
  let $\ec, \ell > 0$ be given parameters. Then there is a randomized
  algorithm that runs in $\bigO{\ell \ec/\eps}$ time with the
  following guarantee.

  Let $\ecB$ be the minimum capacity of all $(r,t)$ cuts where the
  sink component has in-volume at most $\ell$. If $\ecB < \ec$, then
  with constant probability, the algorithm returns an $(r,t)$-cut of
  capacity at most $\epsmore \ecB$.
\end{lemma}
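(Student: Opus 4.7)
The plan is to mimic the local max-flow construction of Forster et al.~\cite{fnsyy}, with $r$ distinguished as the unique source. I would maintain a residual graph initially equal to $G$ and repeat the following for $\ec$ augmenting phases: starting from $t$, perform a randomized reverse DFS that, at each visited vertex $v$, picks a uniformly random unused residual in-edge $(u, v)$ and recurses from $u$. If the DFS reaches $r$, push one unit of flow along the discovered path from $r$ to $t$; if the DFS exhausts all reverse-reachable residual in-edges from $t$, the explored set $T$ yields an $(r,t)$-cut $\inneighbors{T}$ in the original graph; if the DFS exceeds a per-phase budget of $\bigO{\ell/\eps}$ edge explorations, abort and move to the next phase. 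Pre-shuffling each in-adjacency list once supports uniform random selection of an unused in-edge in amortized constant time, so each DFS runs in time linear in the number of edges it explores.

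The total work is $\bigO{\ell\ec/\eps}$ by construction, and the algorithm returns the smallest closing cut produced during any phase. For correctness I would appeal to the FNSYY argument: conditional on the hypothesis that $\ecB < \ec$ and some $(r,t)$-cut with sink-side in-volume at most $\ell$ realizes $\ecB$, the randomized DFS finds an augmenting path with constant probability in $\bigO{\ell/(\eps \ecB)}$ expected steps whenever the residual max $(r,t)$-flow is still at least $\eps \ecB$. Summed over phases this fits within the total budget, so with constant probability the algorithm pushes at least $(1-\eps)\ecB$ units of flow and some subsequent DFS closes, certifying a cut of capacity $(1-\eps)\ecB + \eps\ecB = \ecB \leq \epsmore\ecB$ in the original graph (by complementary slackness: the flow across the closing cut is $(1-\eps)\ecB$ and the residual cut capacity is at most $\eps\ecB$).

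The main obstacle is the probabilistic amortized budget argument inherited from \cite{fnsyy}. The intuition is that a uniformly random in-edge at a vertex inside the sink side of the optimum cut (which has in-volume $\leq \ell$) hits a residual cut edge pointing toward $r$'s side with probability at least (residual cut value)$/\ell$, so the random walk escapes in $\bigO{\ell/(\text{residual cut})}$ expected steps; summing a geometric-style series over the $\ecB$ augmenting paths yields an expected total of $\bigO{\ell\ec/\eps}$ explorations, and Markov's inequality bounds the probability of exceeding this budget. The only modification relative to~\cite{fnsyy} is cosmetic: the DFS terminates upon reaching $r$ specifically, rather than upon reaching any element of a sampled source set, which does not affect the probabilistic analysis.
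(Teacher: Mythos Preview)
Your algorithm departs from both \cite{fnsyy} and the paper at the step where the reverse DFS exceeds its $\bigO{\ell/\eps}$ edge budget: you abort and move on, whereas the correct local method routes one unit of flow to the endpoint of a \emph{randomly sampled} visited edge and updates the residual graph accordingly. This random-sink step is the crux, and without it the algorithm can fail outright. The reverse DFS from $t$ may escape the small sink set $T$ (of in-volume $\leq \ell$) quickly, but once outside $T$ it has no mechanism for locating $r$: the source side of the cut can contain arbitrarily many edges, so the DFS can exhaust its $\bigO{\ell/\eps}$ budget wandering there without ever touching $r$. Since an aborted phase pushes no flow, the residual graph is unchanged and the next phase faces the identical situation (with a single up-front shuffle the behaviour is even deterministic across phases). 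All $\ec$ phases may thus abort, and nothing is returned.

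Your intuition paragraph conflates ``escaping $T$'' with ``reaching $r$.'' The escape-probability heuristic you sketch at best bounds the expected number of steps for the DFS to leave $T$, not to arrive at the specific vertex $r$; that is not what \cite{fnsyy} proves. The paper's analysis sidesteps this issue entirely: by routing to a random endpoint whenever the budget is hit, every iteration sends one unit of flow \emph{somewhere}, and since $T$ has out-volume at most $\ell$ in the reversed graph, that endpoint lands inside $T$ with probability at most $\eps/2$. After $(1+\eps)\ecB$ iterations, Markov's inequality gives that with probability at least $1/2$ fewer than $\eps\ecB$ units terminated inside $T$, so more than $\ecB$ units have crossed the cut out of $T$ --- impossible unless the DFS has already gotten stuck and returned a cut of capacity at most $(1+\eps)\ecB$. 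Your final complementary-slackness accounting is also off: because you only push genuine $(r,t)$-flow, a closing DFS would certify the exact minimum $(r,t)$-cut, not an approximation; the $(1+\eps)$ factor in the paper arises precisely from the units routed to random non-$r$ sinks.
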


\begin{proof}
  Let $T$ be the sink component of a minimum $(r,t)$-edge cut among
  those where the sink component has in-volume at most $\ell$.  We run
  a randomized variation of augmenting paths in the reversed graph
  $\revG$ where $t$ is the source. Note that $T$ now has out-volume at
  most $\ell$ in $\revG$. We run the following subroutine for at most
  $1 + \epsmore \ec$ iterations, where each iteration routes one unit
  of flow from $t$ to some chosen $v$.

  Each iteration $i$ runs DFS from $t$ in the residual graph, until it
  either (a) visits $r$, (b) has traversed edges of total capacity at
  least $\bigO{\ell / \eps}$, or (c) has explored all the edges
  reachable from $t$ while failing to satisfy either (a) or (b). In
  event (a), we route one unit of flow to $r$. In event (b), we select
  one of the visited edges randomly in proportion to their capacity,
  and route one unit of flow to the endpoint of that edge. In either
  case, after routing, we update the residual graph by reverse (one
  unit of capacity) of each edge on the path from $t$ to the selected
  sink.  In event (c), we return the entire component of vertices
  reachable from $t$ which induces an $(r,t)$-cut in the original
  graph.  If, after $1 + \epsmore \ec$ iterations, we never reach
  event (c), then the algorithm terminates with failure.

  We first argue that we return a $\epsmore$-approximate $(r,t)$-cut
  with constant probability.  We first point out that the total
  out-volume of $T$ in the residual graph never increases, as we are
  reversing edges along edges along a path starting from $t$. Next, we
  observe that in each instance of event (b), where we randomly sample
  the endpoint of a visited edge as a sink, there is less than
  $\eps / 2$ probability that this endpoint lies in $T$. This is
  because the graph search has traversed a total capacity of at least
  $\bigO{\ell / \eps}$, and $T$ has out-volume at most $\ell$. That
  is, the out-volume of $T$ represents at most an
  $(\eps / 2)$-fraction of the searched edges.

  Now, over the first $\epsmore \ecB$ iterations, we expect to sample
  less than $\eps \ecB/ 2$ sinks from $T$. By Markov's inequality, we
  sample less than $\eps \ecB$ sinks from $T$ over the first
  $\epsmore \ecB$ iterations with probability at least $1/2$. In this
  event, if the algorithm did not find an $(r,t)$-cut within the first
  $\ecB$ iterations, then we must have routed more than $\ecB$ units
  of flow out of $T$ -- a contradiction. Thus the algorithm finds an
  $(r,t)$-cut within $\epsmore \ecB$ iterations with probability at
  least $1/2$. Since this cut was obtained as the reachable set of $t$
  after routing at most $\epsmore \ecB$ units of flow, the cut has
  capacity $\leq \epsmore \ecB$.

  It remains to prove the running time. Each iteration takes
  $\bigO{\ell / \eps}$ time to traverse at most $\bigO{\ell / \eps}$
  edges. The algorithm runs for at most $\bigO{\ec}$ iterations.
\end{proof}

The next lemma presents the approximate rooted vertex cut algorithm
that uses \reflemma{rooted-local-cut} as a subroutine. It also uses
the rooted sparsification lemma to reduce the size of the graph and
give stronger bounds on the volume of the sink component of the
desired vertex cut.

\begin{lemma}
  \labellemma{vc-small-sink} Let $\eps > 0$ be fixed. \setupVGintr Let
  $k \in \naturalnumbers$ and suppose that the sink component of the
  minimum $r$-cut has weight $\leq k$. Then a $\epsmore$-approximate
  minimum $r$-cut can be computed with high probability in
  $\bigO{m + (W - \vc) k^2 \log{n} \log{k} / \eps^3}$ randomized time.
\end{lemma}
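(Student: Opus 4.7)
The plan is to combine three ingredients: the rooted sparsification lemma for vertex cuts (\reflemma{vertex-sparsification}), a reduction from vertex to edge connectivity via vertex splitting, and the local cut algorithm (\reflemma{rooted-local-cut}) applied to candidate sinks obtained by weighted random sampling.

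First, I would apply \reflemma{vertex-sparsification} with parameter $k$, producing in $O(m)$ time a sparsified graph $\bar G$ with $O(nk/\eps)$ edges in which the minimum rooted vertex cut is preserved (since by hypothesis $w(T) \leq k$, where $T$ is the sink of some minimum rooted vertex cut). I would also compute the in-neighborhood weight of every non-root vertex in $O(m)$ time to detect singleton cuts directly; by \reflemma{small-sink-small-vertex-cut}, if no singleton vertex gives a $\epsmore$-approximation then we may assume $\vc = O(k/\eps)$ in $\bar G$. The sparsification additionally caps the in-degree of every non-root vertex by $O(k/\eps)$, which is crucial in the analysis below.

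Second, I would reduce to an edge-cut problem by vertex splitting: each vertex $v$ is split into $v^-, v^+$ joined by an internal edge of capacity $w(v)$, and each original edge $(u,v)$ becomes a non-split edge $(u^+, v^-)$ of capacity chosen at least $\vc_{\max} = O(k/\eps)$ so that the minimum $(r^+, t^-)$ edge cut in the split graph equals the minimum $(r,t)$ vertex cut in $\bar G$. For each guess $k_1 \in \{1,2,4,\ldots,k\}$ of the weight of $T$, I would sample $O\bigl(\tfrac{W-\vc}{k_1}\log n\bigr)$ vertices $t$ from $V \setminus (\{r\} \cup \outneighbors{r})$ in proportion to their weight; since $T$ is disjoint from $\{r\} \cup \outneighbors{r}$, the correct guess (with $k_1 \leq w(T) < 2k_1$) hits some $t \in T$ with high probability. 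For each sampled $t$ I would invoke \reflemma{rooted-local-cut} on the reversed split graph with $\ell = O(k/\eps)$ and $\ec = O(k/\eps)$, so that each call costs $O(\ell \ec/\eps) = O(k^2/\eps^3)$, and output the minimum cut found across all samples.

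The main obstacle is making the split-graph construction interact cleanly with the local-cut algorithm's in-volume parameter. A naive assignment of infinite capacity to the non-split edges would blow up the in-volume of the split sink to infinity; too-aggressive a finite capacity could allow a non-split edge to appear in the minimum cut. Here the sparsification is essential: combined with the bound $|T| \leq w(T) \leq k$, it caps the number of in-edges to $T$ at $O(k^2/\eps)$, so with a careful choice of capacities (or, equivalently, a minor modification of the DFS termination in \reflemma{rooted-local-cut} that counts only split-edge capacity toward the $\ell$ budget) the effective in-volume of the split sink is kept at $O(k/\eps)$. Summing $O\bigl(\tfrac{W-\vc}{k_1}\log n\bigr)$ local-cut calls of cost $O(k^2/\eps^3)$ across the $O(\log k)$ guesses for $k_1$ yields the claimed $\apxO{m + (W-\vc)k^2 \log k/\eps^3}$ running time; high-probability success follows from standard $\log n$ amplification of the constant-probability guarantee of \reflemma{rooted-local-cut}, combined with a union bound over the enumerated guesses.
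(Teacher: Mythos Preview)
Your overall architecture---sparsify, split, sample a sink $t$ by weight, and run the local cut algorithm---is exactly the paper's. The gap is in the step you flag yourself: you claim the effective in-volume of the split sink $T'$ can be held at $O(k/\eps)$, either by a ``careful choice of capacities'' (unspecified) or by modifying the DFS to count only split-edge capacity. Neither is justified. For the first option, any capacity on the non-split edge $(u^+,v^-)$ must be at least $\min\{w(u),w(v)\}$ for the reduction to be valid, and with that floor the non-split in-volume of $T'$ is still $\Omega(|T|\cdot k/\eps)$, which can be $\Omega(k^2/\eps)$. For the second option, if the DFS stops only when the \emph{split}-edge capacity budget is exhausted, you must still bound the number of \emph{non-split} edges traversed (which governs the running time per iteration) and argue that the random sink selection avoids $T'$; neither is immediate once residual edges appear.

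The paper's fix is simpler and does not need any modification of \reflemma{rooted-local-cut}: assign the non-split edge $(u^+,v^-)$ capacity $w(u)$ (not $\infty$ and not a fixed $O(k/\eps)$). Then the in-volume of $T'$ is bounded by $O(k/\eps)\cdot w(T)$ directly, because after sparsification every $v^-$ has weighted in-degree at most $O(k/\eps)$ and there are at most $|T|\le w(T)$ such vertices. Now set the local-cut volume parameter to $O(k_1 k/\eps)$ (not $O(k/\eps)$) where $k_1$ is the current guess for $w(T)$. Each call costs $O(k_1 k^2/\eps^3)$, and against $O((W-\vc)\log n / k_1)$ samples the $k_1$ cancels, giving $O((W-\vc)k^2\log n/\eps^3)$ per guess and the stated bound after summing over the $O(\log k)$ guesses. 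In short: do not try to force the in-volume down to $O(k/\eps)$; let it scale with the guess $k_1$ and let the cancellation do the work.
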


\begin{proof}
  By \reflemma{small-sink-small-vertex-cut}, either a
  $\epsmore$-approximate minimum cut is induced by a singleton, or the
  minimum $r$-vertex cut has weight at most $\bigO{k / \eps}$. The
  former is addressed by inspecting all singleton cuts. For the rest
  of the proof, let us assume the latter. By
  \reflemma{vertex-sparsification}, we can sparsify the graph to have
  maximum weighted in-degree $\bigO{k / \eps}$, hence at most
  $\bigO{n k / \eps}$ total edges.

  Let $T$ be the sink component of the minimum $r$-cut, which has
  total vertex weight at most $\bigO{k}$, and induces an $r$-cut with
  capacity $\vc \leq \bigO{k/\eps}$. Recall the standard auxiliary
  ``split-graph'' where vertex capacities are modeled as edge
  capacities.  The high-level idea is to find a vertex $t \in T$ by
  random sampling and then apply \reflemma{rooted-local-cut} to the
  appropriate auxiliary vertices of $r$ and $t$ in the split graph.

  To this end, we first bound the volume of the sink-component
  corresponding to $T$ in the split-graph. We recall that the split
  graph splits each vertex $v$ into an auxiliary ``in-vertex'' $\vin$
  and an auxiliary ``out-vertex'' $\vout$. For each $v$ there is a new
  edge $(\vin,\vout)$ with capacity equal to the vertex capacity of
  $v$.  Each edge $(u,v)$ is replaced with an edge $(\uout,\vin)$ with
  capacity\footnote{Usually, this edge is set to capacity $\infty$,
    but either the weight of $u$ or the weight of $v$ are also valid.}
  equal to the vertex capacity of $u$.  As a sink component, $T$ maps
  to a vertex set $T'$ in the split-graph consisting of (a) both
  copies $\vin$ and $\vout$ of each vertex $v \in T$, and (b) the
  out-vertex $\vout$ of each vertex $v$ in the vertex in-cut
  $\inneighbors{T}$. For each vertex $v \in T$, $\vin$ has
  (edge-)weighted in-degree equal to the vertex-weighted in-degree of
  $v$ in the original graph, which is at most $\bigO{k / \eps}$. This
  sums to $\bigO{\sizeof{T} k / \eps}$ over all $v \in T$. For each
  $v \in T$, $\vout$ has weighted in-degree equal to the vertex weight
  of $T$, which sums to the total vertex weight of $T$. Lastly, for
  each $v \in \inneighbors{T}$, $\vout$ has weighted in-degree equal
  to the vertex weight of $v$. This sums to $\vc \leq \bigO{k / \eps}$
  over all $v \in \inneighbors{T}$. All summed up, the total in-volume
  of $T'$ in the split-graph is at most $\bigO{k / \eps}$ times the
  total vertex weight of $T$.

  Suppose we had a constant factor estimate $\ell$ for the total
  vertex weight of $T$. Then we can sample
  $\bigO{(W - \outdegree{r}) \log{n} / \ell} \leq \bigO{(W - \vc)
    \log{n} / \ell}$ vertices by weight from
  $V \setminus \parof{\setof{r} \cup \outneighbors{r}}$. With high
  probability, we sample $\bigO{\log n}$ vertices from $T$. For each
  sampled vertex $t$ we invoke \reflemma{rooted-local-cut} to find an
  $(r,t)$-cut, with upper bound $\bigO{\ell k / \eps}$ on the volume
  of the sink component and $\bigO{k/ \eps}$ as the upper bound on the
  cut. With high probability, one of these calls returns a
  $\epsmore$-approximate cut. The total time, over all calls, would be
  $\bigO{(W - \vc) \log{n} k^2 / \eps^3}$.

  Of course, we do not know the vertex weight of $T$ \emph{a
    priori}. However, we know that it is upper bounded by $k$, and let
  $\ell$ enumerate all powers of 2 between $1$ and $k$. For each
  $\ell$, run the process described above under the hypothesis that
  $\ell$ is a constant factor estimate for the total vertex weight of
  $T$. Each choice of $\ell$ takes
  $\bigO{(W - \vc) \log{n} k^2 / \eps^3}$ time. There are
  $\bigO{\log{k}}$ choices of $\ell$. One of these choices of $\ell$
  is a constant factor for the total volume of $T$ and produces a
  $\epsmore$-approximate minimum $(r,t)$-cut with high probability.
\end{proof}

\subsection{Rooted vertex connectivity}

\labelsection{rooted-vc}

We now present the algorithm for approximate rooted vertex
connectivity and prove \reftheorem{simple-vc-rooted}. The algorithm
combines the subroutine in \reflemma{fixed-sink-size-vertex-cut} for
logarithmically many ranges of weights, and \reflemma{vc-small-sink}
for sufficiently small weights. We restate
\reftheorem{simple-vc-rooted} for the sake of convenience.

\SimpleVCRooted*

\begin{proof}
  Let $\vc_0 = \eps \sqrt{n}$. Let $i_0 = \logdown{\vc_0}$, and let
  $i_1 = \max{\logup{\eps m / n}, i_0 + 1}$ For each
  $i = \logdown{\vc_0}, \logdown{\vc_0} + 1, \dots, i_1-1$, let
  $k_i = 2^{i}$.  Let $k_{i_1} = W - \outdegree{r}$ where we recall
  that $\outdegree{r}$ is the weighted out-degree of $r$.  For each
  $i$, we apply \reflemma{fixed-sink-size-vertex-cut} with lower bound
  $k_i$ and upper bound $k_{i+1}$ on the weight of the sink component
  of the minimum vertex $r$-cut. We repeat this subroutine
  $\bigO{\log n}$ times for each $i$ to amplify the success
  probability from constant to high probability. We use
  $\vctime{m}{n} = \apxO{m + n^{1.5}}$ \cite{brand+}. We also apply
  \reflemma{vc-small-sink} with $\eps \vc_0$ has an upper bound on the
  sink component size. The set of all cuts obtained by these methods
  includes a $\epsmore$-approximate minimum $r$-cut with high
  probability, and we return the minimum of these cuts.  The combined
  running time is
  \begin{align*}
    \apxO{m + \frac{(W - \vc) n}{\eps} + \frac{(W - \vc) n^{1.5}}{
    \vc_0} + (W - \vc) \vc_0^2 / \eps^3} %
    \leq                           %
    \apxO{m + (W - \vc) n / {\eps}},
  \end{align*}
  as desired. The exact bound follows by first using the approximation
  algorithm to obtain a constant factor estimate for $\vc$, and then setting
  setting $\eps \leq 1/\vc$.
\end{proof}

\subsection{Global vertex connectivity}

\labelsection{global-vertex-connectivity}

We now shift to global vertex connectivity and prove
\refcorollary{simple-vc-global}, which we address by reduction to the
algorithm for rooted vertex connectivity above. We note that obtaining
a root is slightly non-trivial because many vertices may be in the
minimum weight vertex cut. We restate \refcorollary{simple-vc-global}
for the sake of convenience.

\SimpleVCGlobal*
\begin{proof}
  Let $\vc$ denote the global vertex connectivity.  If we sample a
  single vertex $r$ in proportion to its weight, then with probability
  $1 - \vc / W$, $r$ is not in the minimum vertex cut. Then either the
  rooted vertex connectivity from $r$, or to $r$ (i.e., from $r$ in
  the graph $G'$ with all the edges reversed), will give the rooted
  vertex cut. In principle we would like to apply
  \reftheorem{rooted-vc-reduction} with root $r$ in both orientations,
  which conditional on $r$ not being in the minimum cut, succeeds with
  high probability. We amplify by repeating
  $L = O(\frac{W}{W - \kappa}\log n)$ times to obtain the high
  probability bound. Observe that the running time, via
  \reftheorem{rooted-vc-reduction}, is
  \begin{align*}
    \apxO{m L + n W / \eps}.
  \end{align*}
  We would like to remove the $m L$ factor.

  To this end, observe that the $m$ term arises from applying the
  rooted sparsification lemma for various estimates $k$ of the weight
  of the sink component.  Recall that for fixed $k$ and $\eps$, the
  sparsification lemma replaces, for every vertex $v$ with in-degree
  $> \bigO{k / \eps}$, all the incoming edges to $v$ with a single
  edge from the root. Note that much of the sparsification lemma can
  be executed without $r$.  In particular, we can remove all incoming
  edges to the high in-degree vertices without knowing $r$; once $r$
  is given, we add an edge from $r$ to each of these vertices. The key
  point is that the first part, which takes $\bigO{m}$ time, can be
  done once for all $L$ sampled roots for each value of
  $k$. Thereafter, each of the $L$ roots takes $\bigO{n}$ to complete
  the sparsification for that root. This replaces the $\apxO{m L}$
  term with $\apxO{n L}$, which is dominated by $\apxO{n W / \eps}$.

  For the exact algorithm, we first apply the approximation algorithm
  with $\eps = 1/2$ obtain a factor-2 approximation to $\vc$ within
  the claimed running time. We then apply the approximation algorithm
  again with $1 / (2 \vc) \leq \eps \leq 1/\vc$.
\end{proof}

%%% Local Variables:
%%% mode: latex
%%% TeX-master: "main"
%%% End:

\printbibliography

\end{document}